\newtheorem{theorem}{Theorem}
\newtheorem{lemma}{Lemma}
\newtheorem{corollary}{Corollary}[section]
\newenvironment{proof}{{\noindent\it Proof:}\quad}{\hfill $\square$\par}
\begin{document}

\title{Towards Timely Video Analytics Services at the Network Edge}

\author{Xishuo Li,~\IEEEmembership{Student Member,~IEEE,}
        Shan Zhang,~\IEEEmembership{Member,~IEEE,}
        Yuejiao Huang,~\IEEEmembership{Student Member,~IEEE,}
        \\
        Xiao Ma,~\IEEEmembership{Member,~IEEE,}
        Zhiyuan Wang,~\IEEEmembership{Member,~IEEE,}
        Hongbin Luo,~\IEEEmembership{Member,~IEEE}
\thanks{Part of the results in this paper appeared in IEEE ICWS 2023 \cite{DBLP:conf/icws/XSLi23}.}

\thanks{This work was supported in part by the National Key R\&D Program of China under Grant 2022YFB4501000, in part by the Nature Science Foundation of China under Grant 62271019, 62225201, 62202021, 62372061, in part by the Fundamental Research Funds for the Central Universities, China, and State Key Laboratory of Complex \& Critical Software Environment. \textit{(Corresponding author: Shan Zhang)}}
\thanks{Xishuo Li, Shan Zhang, Yuejiao Huang, Zhiyuan Wang and Hongbin Luo are with State Key Laboratory of Complex \& Critical Software Environment, Computer Science and Engineering, Beihang University, Beijing, 10086, China. (e-mail: lixishuo@buaa.edu.cn; zhangshan18@buaa.edu.cn; SY2106125@buaa.edu.cn; zhiyuanwang@buaa.edu.cn; luohb@buaa.edu.cn)}
\thanks{Xiao Ma is with the State Key Laboratory of Networking and Switching Technology, Beijing University of Posts and Telecommunications, Beijing 100876, China. (e-mail: maxiao18@bupt.edu.cn)}
\thanks{Shan Zhang, Yuejiao Huang, Zhiyuan Wang and Hongbin Luo are also with Zhongguancun Laboratory, Beijing, China.}
}

\markboth{Journal of \LaTeX\ Class Files,~Vol.~14, No.~8, August~2021}%
{Shell \MakeLowercase{\textit{et al.}}: A Sample Article Using IEEEtran.cls for IEEE Journals}


\maketitle

\begin{abstract}
Real-time video analytics services aim to provide users with accurate recognition results timely. However, existing studies usually fall into the dilemma between reducing delay and improving accuracy. The edge computing scenario imposes strict transmission and computation resource constraints, making balancing these conflicting metrics under dynamic network conditions difficult. In this regard, we introduce the age of processed information (AoPI) concept, which quantifies the time elapsed since the generation of the latest accurately recognized frame. AoPI depicts the integrated impact of recognition accuracy, transmission, and computation efficiency. We derive closed-form expressions for AoPI under preemptive and non-preemptive computation scheduling policies w.r.t. the transmission/computation rate and recognition accuracy of video frames. We then investigate the joint problem of edge server selection, video configuration adaptation, and bandwidth/computation resource allocation to minimize the long-term average AoPI over all cameras. We propose an online method, i.e., Lyapunov-based block coordinate descent (LBCD), to solve the problem, which decouples the original problem into two subproblems to optimize the video configuration/resource allocation and edge server selection strategy separately. We prove that LBCD achieves asymptotically optimal performance. According to the testbed experiments and simulation results, LBCD reduces the average AoPI by up to 10.94X compared to state-of-the-art baselines.
\end{abstract}

\begin{IEEEkeywords}
Edge computing, video analytics, age of information, video configuration adaptation, resource allocation
\end{IEEEkeywords}

\section{Introduction}
\IEEEPARstart{V}{ideo} analytics is a key component in many modern applications, including public security, traffic monitoring, patient healthcare, and industrial automation \cite{DBLP:conf/icws/XSLi23,DBLP:journals/corr/abs-2303-14329}. As a transmission- and computation-intensive service, video analytics can greatly benefit from mobile edge computing (MEC), where videos from different cameras are uploaded to nearby edge servers for recognition, as shown in Fig. \ref{Fig.system}. To better utilize the limited transmission and computation capacity of edge networks, numerous studies have been conducted to flexibly adjust video configurations (e.g., resolution) on cameras and allocate network resources to reduce delay \cite{DBLP:journals/ton/ZhangWJWQXL22,DBLP:conf/nsdi/ZhangABPBF17,DBLP:journals/tmc/LuCPP19} and improve recognition accuracy \cite{DBLP:conf/infocom/LiuHOH18, DBLP:conf/sigcomm/JiangABSS18, DBLP:conf/infocom/WangWLJJ021}. While these works have made significant contributions, this paper seeks to address two open issues.

\begin{figure}[tbp] 
\centering 
\includegraphics[width=\linewidth]{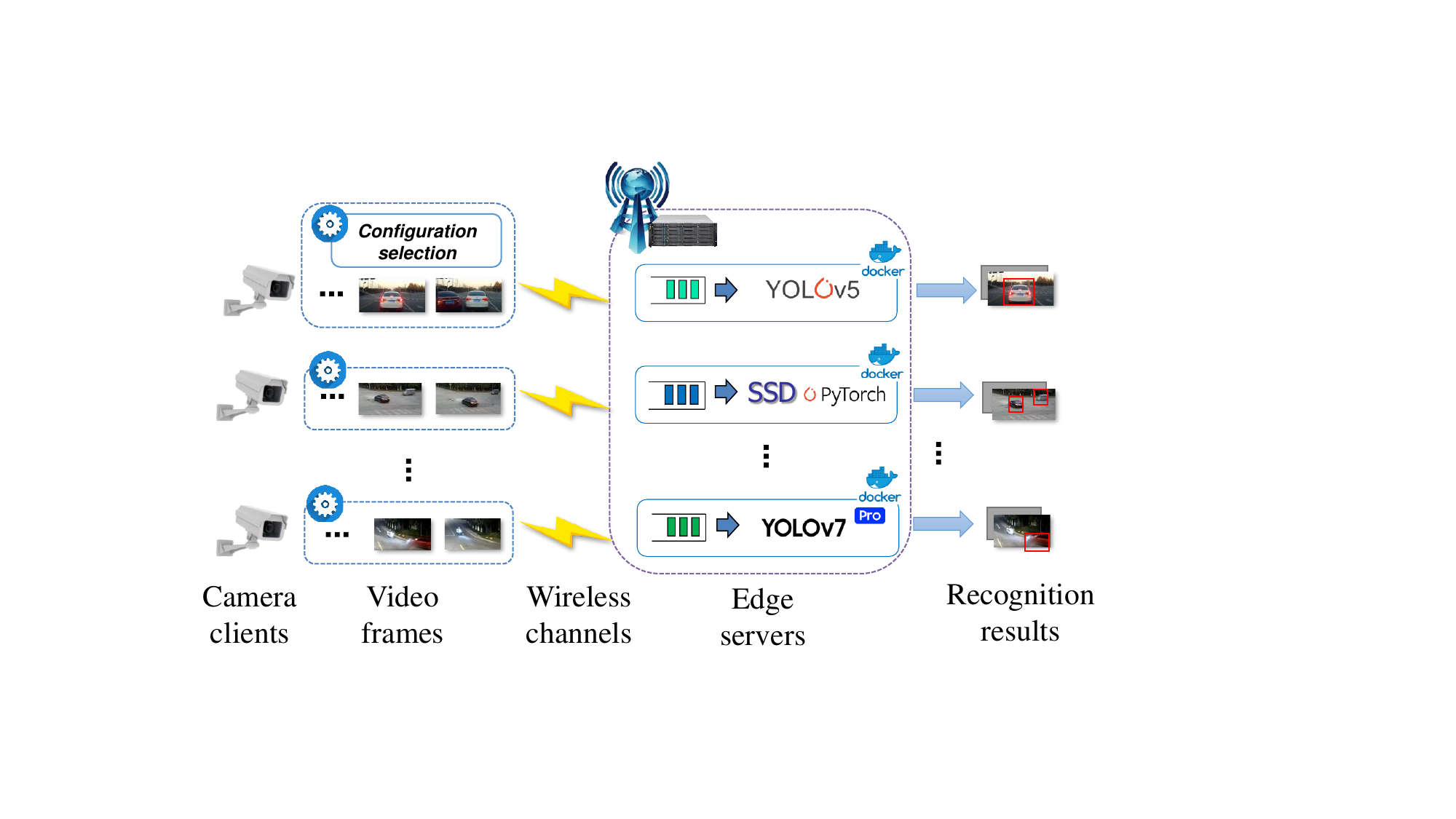} 
\caption{An illustration of the edge-assisted video analytics system.}
\label{Fig.system}
\end{figure}

Firstly, video analytics services require video frames to be transmitted in time and accurately recognized. Existing works usually focus on optimizing the Pareto-optimal or weighted sum of delay and accuracy \cite{DBLP:conf/infocom/LiuHOH18,DBLP:conf/nsdi/ZhangABPBF17}. Nevertheless, these heterogeneous metrics can be conflicting and incomparable, making it hard to determine the optimal trade-off, especially under time-varying network conditions. In this regard, a fundamental challenge is: \textit{How to model the overall performance of video analytics services in a unified manner, considering the integrated impact of transmission and computation phases?}

Secondly, the dynamic nature of video content induces the uncertainty of recognition accuracy, even if the video configuration remains unchanged. For example, uploading low-resolution frames achieves high accuracy if the detected objects move slowly, but the accuracy may significantly degrade in other cases. Additionally, the bandwidth and computation resources in edge networks are time-varying and usually unpredictable, affecting the transmission and computation efficiency. Given that video analytics services usually operate for a long period (e.g., illegal driving detection) \cite{DBLP:journals/pieee/ZhangSWZ19}, this raises another challenge: \textit{How to maintain the long-term performance of video analytics services without knowing the future variations of network conditions and video content?}

To handle the first challenge, we introduce the age of processed information (AoPI) concept in this work, which is the time elapsed since the generation of the latest accurately recognized frame received by the user. In the edge computing scenario, the widely-used delay metric only accounts for the time each frame takes to produce the recognition results. However, the delay metric ignores the generation interval between effective results of different frames, making the user's knowledge about the monitored environment outdated. On the other hand, the accuracy metric only accounts for the average recognition accuracy of all video frames, while the accumulative impact of recognition failures is ignored. As a variant of the widely-used age-of-information (AoI) concept \cite{DBLP:journals/tit/Yates20}, AoPI jointly reflects the transmission/computation efficiency and recognition accuracy. Lower AoPI indicates that users can be updated with the latest accurate results in time, which can be achieved with sufficient transmission/computation resources. Specifically, we derive the closed-form expressions of AoPI w.r.t the transmission, computation rate, and recognition accuracy of video frames under preemptive and non-preemptive computation policies at the edge server side, respectively. The derived results provide valuable insights for resource allocation and computation policy selection. The analytical results are further validated through testbed experiments.

To handle the second challenge, we formulate the joint problem of edge server selection, video configuration adaptation (i.e., video resolution, neural network model, and computation policy), and bandwidth/computation resource allocation among cameras. The formulated problem intends to minimize the long-term average AoPI over all cameras in an online manner, while meeting the predefined average recognition accuracy requirement. We propose an efficient Lyapunov-based block coordinate descent (LBCD) method to handle the formulated problem. LBCD first transforms the original long-term problem into a series of one-slot problems that require no future knowledge via Lyapunov optimization. Then, each one-slot problem is further decoupled into two subproblems. The first subproblem adjusts video configurations and allocates transmission/computation resources given the edge server selection strategy of cameras, which is solved using a block coordinate descent method. The second subproblem assigns cameras to different edge servers, which is solved using a first-fit method to match the cameras' resource requirements and the edge servers' capacity. We prove that the LBCD method can achieve asymptotically optimal performance. Testbed experiments and simulation results show that the LBCD is superior in resource-limited and large-scale scenarios. Specifically, LBCD can reduce the average AoPI by up to 10.94X compared with the state-of-the-art baselines.

Our main contributions in this paper are as follows.
\begin{itemize}
    \item The AoPI concept is adopted to model the overall performance of video analytics services, which depicts the integrated impact of transmission, computation efficiency, and accuracy. The closed-form expressions of AoPI are derived under preemptive and non-preemptive computation  policies, which are validated on the testbed.
    \item Aiming to minimize the long-term average AoPI of all cameras, the joint edge server selection, video configuration adaptation and bandwidth/computation resource allocation problem is formulated. The insight behind the problem is to design flexible strategies to accommodate timely video analytic services under the intrinsically dynamic video content and network variations.
    \item An online method, LBCD, is proposed to solve the long-term optimization problem, which can achieve provable asymptotically optimal performance without future information. Testbed experiments and simulation results show that LBCD can reduce the average AoPI by up to 10.94X, compared with the state-of-the-art baselines.
\end{itemize}

The rest of this paper is organized as follows. Section \ref{rel_wor} first reviews the related studies. Next, Section \ref{sys_mod} describes the system model and formulates the optimization problem. Then, the closed-form expressions of AoPI are derived in Section \ref{ave_aoi}. In Section \ref{oad}, an online method, LBCD, is proposed to handle the formulated problem. Section \ref{sim_res} presents the experimental results. Finally, in Section \ref{con_fut}, we conclude this paper and discuss future work.

\section{Related Work}
\label{rel_wor}
Video analytics services require delivering accurate results to users timely \cite{DBLP:journals/corr/abs-2211-15751}. A key issue in existing studies is how to adjust video configurations according to the user preferences and network conditions to improve the overall transmission and computation performance \cite{DBLP:journals/ton/ZhangWJWQXL22,DBLP:journals/tpds/KhochareKS21}. The existing works can be classified based on their different design objectives.

\textbf{Timeliness guarantee}: Recent studies pointed out that users care more about the timeliness than the average delay of the video analytics results, especially under dynamic network conditions \cite{DBLP:conf/infocom/zhangcasva22}. For example, in \cite{DBLP:conf/nsdi/ZhangABPBF17}, the \textit{lag} constraints of recognition results are considered to tune the configuration of different video tasks, i.e., the time elapsed since the arrival of the last processed frame should be lower than a certain threshold. In \cite{DBLP:conf/infocom/zhangcasva22}, the \textit{upload lag} is introduced to adjust the video configurations via reinforcement learning, i.e., the difference between the expected and actual uploading time of frames should be minimized. Besides these lag metrics, the AoI concept is more commonly used to depict the timeliness of real-time systems \cite{DBLP:journals/tit/Yates20}. Early studies on AoI focus on the sampling and transmission phase of data. Until recently, some researchers have studied the impact of the computation phase on AoI \cite{DBLP:journals/jsac/YatesSBKMU21a}. For example, the average AoI for processing computation-intensive messages with MEC under binary and partial offloading schemes is derived in closed-form in \cite{DBLP:journals/tvt/KuangGCM20}. In \cite{DBLP:journals/access/LiuQZZ20}, an algorithm with close-to-optimal AoI is designed to jointly optimize the offloading and computation resource allocation decisions.

\textbf{Accuracy enhancement}: Many existing studies focus on enhancing the inference accuracy of video analytics systems \cite{DBLP:conf/infocom/LiuHOH18}. For example, in \cite{DBLP:conf/sigcomm/JiangABSS18}, the temporal and spatial correlations in video contents are carefully utilized to help choose video configurations to achieve a higher accuracy. Given that the video contents are complex and time-varying, some researchers value the worst cases of frequent wrong recognition more than only the average accuracy performance. In \cite{DBLP:conf/infocom/WangWLJJ021}, an adaptive data rate controller is designed to improve the \textit{tail accuracy} of video analytics, i.e., the extremely low accuracy of some complex frames and object classes.

\textbf{Multi-objective optimization}: There are also many studies which optimize multiple objectives simultaneously \cite{DBLP:journals/corr/abs-2303-14329}. A typical example is \cite{DBLP:conf/infocom/LiuHOH18}, where Liu \textit{et al.} adjusted the video resolution to maximize the weighted sum of delay and inference accuracy under the minimum frame rate constraint. Similarly, in \cite{DBLP:journals/tii/LinYZLCY23}, Lin \textit{et al.} proposed a matching theory-based method to adjust the video frame rate to minimize the weighted sum of delay and accuracy of video analytics for on-road vehicles. Besides, the Pareto-optimal and hard/soft constraint of delay/lag and accuracy are also commonly considered in video configuration adaptation problem \cite{DBLP:journals/ton/ZhangWJWQXL22}. For instance, in \cite{DBLP:journals/iotj/WangHZZCCWC23}, the authors proposed an offloading framework to minimize the bandwidth usage by removing the spatial and temporal redundancies of target objects under the the accuracy and latency requirements of video analytics.

The existing works usually stop at the trade-off relationship between incomparable metrics like delay and accuracy. A gap exists between the overall system performance and the multi-dimensional inconsistent metrics. Motivated by this, we introduce an AoPI metric to model the transmission/computation efficiency and recognition accuracy jointly. Based on the derived AoPI, we study the joint edge server selection, video configuration adaptation and bandwidth/computation resource allocation problem.

\begin{table*}[t]
    \centering
    \renewcommand{\arraystretch}{1.25}
    \caption{Key notations}
    \label{tabel1}
    \begin{tabular}{|c|m{5.5cm}|c|m{5.5cm}|}
\hline
\textbf{Notation} & \textbf{Description} & \textbf{Notation} & \textbf{Description} \\ \hline
$y_{n,t}^s$ & whether camera \(n\) uploads frames to edge server $s$ in slot \(t\) & \(p_{n,t}\) & recognition accuracy of camera \(n\)'s video frames in slot \(t\) \\ \hline
\(b_{n,t}\) & bandwidth allocated to camera \(n\) by its edge server in slot \(t\) & \(A_{n,t}\) & average AoPI of camera \(n\) in slot \(t\) \\ \hline
\(r_{n,t}\) & video resolution selected by camera \(n\) in slot $t$ & $B_t^s$ & available bandwidth of edge server $s$ in slot $t$ \\ \hline
\(\lambda_{n,t}\) & average transmission rate of camera \(n\)'s frames in slot $t$ & $C_t^s$ & available computation capacity on edge server $s$ in slot $t$ \\ \hline
\(x_{n,t}\) & computation policy, i.e., FCFS or LCFSP, adopted by camera \(n\) in slot \(t\) & $P_{\mathrm{min} }$ & long-term average accuracy threshold \\ \hline
\(m_{n,t}\) & neural network model selected by camera \(n\) in slot \(t\) & $\bar{P} _t$ & average recognition accuracy of all cameras in slot $t$ \\ \hline
\(c_{n,t}\) & computation resource allocated to camera \(n\) by its edge server in slot \(t\) & $\bar{A} _t$ & average AoPI of all cameras in slot $t$ \\ \hline
\(\mu_{n,t}\) & average computation rate of camera \(n\)'s frames in slot $t$ & $V$ & adjustable weight parameter between average AoPI and recognition accuracy \\ \hline
\end{tabular}
\end{table*}

\section{System Model}
\label{sys_mod}
\textcolor{black}{
As depicted in Fig. 1, we consider a system that consists of multiple edge servers \(\mathcal{S} =\left \{ 1,\cdots ,S \right \} \) and a set of cameras \(\mathcal{N} =\left \{ 1,\cdots ,N \right \} \). These cameras are deployed in a certain area, e.g., an intersection or factory, and consistently upload captured videos to edge servers for analysis via multiple access points (such as LTE/5G/WiFi). The discrete time model is adopted and the system timeline consists of \(\mathcal{T} =\left \{ 1,\cdots,T \right \} \) time slots. Within each time slot, the available wireless bandwidth and computation capacity of each edge server remain stable. The time length of each slot is 5 minutes to adapt to the dynamic arrival of applications (e.g., traffic video analytics) and change of video content properties (e.g., weather and lighting). The main notations used in following sections are listed in Table \ref{tabel1}.}

\subsection{Frame Uploading Model}
\textcolor{black}{
Given that the available bandwidth and computation capacity of edge servers change with time due to workload fluctuations in edge networks, the cameras can dynamically select proper edge servers to process uploaded video frames. Denote by $y_{n,t}^s=1 $ if edge server $s$ is selected by camera \(n\) in slot \(t\), and $y_{n,t}^s=0 $ otherwise.}

\textcolor{black}{
During each time slot, every camera uploads a new frame to its edge server when the previous frame’s transmission finishes. We adopt this mechanism to avoid the unnecessary idle time on the cameras during the frame transmission process. Besides, the camera may generate frames at a high rate, e.g., 30 FPS. In such case, only the latest frame is uploaded because it contains the latest information and exhibits the most significant divergence from the previously uploaded frames \footnote{The waiting delay of uploaded frames is negligible (several milliseconds) and thus omitted in following analysis.}.} Denote by \(b_{n,t}\) (in Hz) the bandwidth allocated to camera \(n\) by its edge server in slot \(t\). Then the average transmission data rate of camera \(n\) in slot \(t\) is given by:
\begin{equation}
w_{n,t}=b_{n,t}\mathrm{log} _2\left ( 1+\frac{\tilde{E} _n \tilde{G} _{n}}{\sigma }  \right ) ,
\end{equation}
where \(\tilde{E} _n\), \(\tilde{G} _{n}\) and \(\sigma\) represent the transmission power, channel gain of camera \(n\) and the noise power, respectively.

\textcolor{black}{
Denote by \(\mathcal{R} \) the set of all alternative video resolutions. For camera \(n\) in slot $t$, the data size of a frame with resolution \(r_{n,t} \in \mathcal{R}\) is \(\alpha r_{n,t}^2\) (in bits) on average, where \(\alpha\) is a constant parameter depending on the encoding techniques \cite{H.264}. In the wireless environment, the data packets usually have a certain probability of encountering transmission failure due to channel fading and collision in the physical layer\cite{DBLP:journals/icl/ZhangWKL16}. Then, the delay of successfully transmitting camera \(n\)'s frames in slot $t$ can be modeled as an exponentially distributed variable with a mean value \(1/\lambda_{n,t}\)\cite{DBLP:journals/jsac/SriramW86}.} Specifically, \(\lambda _{n,t}\) is given by:
\begin{equation}
\lambda _{n,t}=\frac{w_{n,t}}{\alpha r_{n,t}^2}.
\end{equation}

\subsection{Frame Processing Model}
The edge servers assign different containers to cameras for  recognition. In each container, either first-come-first-serve (FCFS) or last-come-first-serve-with-preemption (LCFSP) computation policy is adopted. Newly-arrived frames have to wait in a queue before recognition under FCFS policy. With LCFSP policy, new frames can preempt under-processed frames on the edge server. Denote by \(x_{n,t}\in \left \{ 0,1 \right \} \) the computation policy adopted by camera \(n\) in slot \(t\), where \(x_{n,t}=0\) represents the FCFS policy and \(x_{n,t}=1\) represents the LCFSP policy.

\textcolor{black}{
The edge servers mainly handle object detection and segmentation tasks, which focus on processing independent frames. These tasks are widely used in practical systems, e.g., alerting traffic accidents or monitoring faults on a production line\cite{DBLP:journals/comsur/JedariPIFMY21}. There are also many complex tasks in video analytics applications, such as behavior analysis and visual question answering, which usually have more lenient timing requirements and rely on the pre-processing results of object detection and segmentation tasks\cite{DBLP:journals/corr/abs-2303-14329}. Hence, in this work, we focus on the fundamental detection/segmentation tasks and take the AoPI optimization for these complex tasks as the future work.}

The cameras can select different neural network models to process video frames, e.g., YOLOv5\cite{yolov5m}, SSD\cite{DBLP:conf/eccv/LiuAESRFB16}, etc. The computation requirement and recognition accuracy of these models usually vary significantly in practical systems (e.g., up to 50X difference) \cite{DBLP:conf/usenix/Romero0YK21}. Denote by \(\mathcal{M}\) the set of all alternative neural network models. For camera \(n\), the computational complexity for processing a frame with resolution \(r_{n,t}\) using model \(m_{n,t} \in \mathcal{M}\) is \(\xi \left ( r_{n,t}, m_{n,t} \right ) \) (in FLOPs) on average, where \(\xi \left ( \cdot, \cdot \right )  \) is a convex function w.r.t frame resolution \cite{DBLP:conf/infocom/LiuHOH18} and proportional to the model parameters number \cite{DBLP:conf/iclr/MolchanovTKAK17}.

\textcolor{black}{
Denote by \(c_{n,t}\) (in FLOPS) the computation resource allocated to camera \(n\)'s container by its edge server in slot \(t\). In practical systems, the processing latency for frames varies depending on the video content\cite{padilla2021comparative}. For instance, frames containing many small objects require more time for the neural network model to process. However, such frames have a relatively low probability of occurrence. Previous computer vision studies found that the exponential distribution can  properly depict such characteristics\cite{DBLP:conf/iccv/MaoYD19}. Hence, the computation delay of camera \(n\)'s frames in slot $t$ is modeled as an exponentially distributed variable with a mean value \(1/\mu_{n,t}\).} Specifically, \(\mu_{n,t}\) is given by:
\begin{equation}
\mu_{n,t}=\frac{c_{n,t}}{\xi \left ( r_{n,t}, m_{n,t} \right )}.
\end{equation}

\textcolor{black}{
The recognition of frames may be either accurate or inaccurate  depending on the video resolution, contents, neural network model, and the specific computer vision tasks. For instance, in object detection tasks, a video frame is considered accurately recognized if the bounding boxes in the results possess the correct label and exhibit sufficient spatial overlap with the ground truth\cite{DBLP:journals/pieee/ZouCSGY23}. In image semantic segmentation tasks, a video frame is deemed accurately recognized if an adequate number of pixels in the results share the same label as the ground truth\cite{DBLP:journals/ijon/HaoZG20}.} 

Denote by \(p_{n,t}\) the recognition accuracy of camera \(n\)'s video frames in slot \(t\). We have:
\begin{equation}
    p_{n,t}=\zeta_n^t \left ( r_{n,t }, m_{n,t}\right ) ,
\end{equation}
where $\zeta_n^t \left ( \cdot, \cdot \right )$ is a concave and monotone increasing function w.r.t frame resolution and model parameters number\cite{DBLP:journals/ton/ZhangWJWQXL22}. Note that $\zeta_n^t \left ( \cdot, \cdot \right )$ is impacted by video content and, thus, changes with time. \textcolor{black}{ In practical systems, $\zeta_n^t \left ( \cdot, \cdot \right )$ can be profiled at the beginning of slot $t$ with low overhead \cite{DBLP:conf/cloud/RomeroZYK21}. Specifically, the latest profiling techniques can finish the profiling process within 10 seconds\cite{DBLP:conf/iwqos/Wu0MZQWZC21}. Besides, the profiling results of previous time slots and nearby cameras can be reused based on comparing the data distributions of different videos \cite{DBLP:journals/access/KimY20b}.}

After recognition, the results are transmitted to the users or the cloud for future usage via the backhaul network, e.g., person re-identification task which fuses the recognition results from different cameras \cite{DBLP:journals/pami/LuoSZ23}. Nevertheless, the unstable frame uploading and recognition phases are the main bottleneck in video analytics services, which are the main focus of this work.

\textcolor{black}{
Denote by \(A_{n,t}\) the average AoPI of camera \(n\) in slot \(t\). The analysis of AoPI relies on the exponentially distributed frame transmission and computation delay. Although the real-world delays may not adhere to exponential distributions, this does not invalidate our proposed method. In our testbed experiments, we found that the transmission and computation delays are more evenly distributed than the exponential distributions. However, the proposed method still outperforms state-of-the-art baselines across various computer vision tasks, illustrating its great potential.}

\subsection{Problem Formulation}

Given the closed-form expression of \(A_{n,t}\) derived in Section \ref{ave_aoi}, we aim to minimize the long-term average AoPI of all cameras. Besides, practical video analytics services usually operate for a long period (e.g., illegal driving detection) and have minimum accuracy constraints on recognition results \cite{DBLP:journals/pieee/ZhangSWZ19}. Hence, we require the long-term average accuracy across all cameras to exceed a certain threshold $P_{\mathrm{min} }$. Then, the optimization problem is:

\begin{align}
\mathbf{\left ( P1 \right )} \ \min_{\left \{ \mathbf{y},\mathbf{r},\mathbf{x},\atop \mathbf{m}, \mathbf{b},\mathbf{c} \right \} } &  \lim_{T \to \infty} \frac{1}{T} \sum_{t = 1}^{T} \frac{1}{N} \sum_{n  = 1}^{N} A_{n,t} \\
s.t. & \ \sum_{n=1}^{N} y_{n,t}^s b_{n,t} \le B_t^s, \forall s \in \mathcal{S}, \forall  t \in \mathcal{T} \\
& \ \sum_{n=1}^{N} y_{n,t}^s c_{n,t} \le C_t^s, \forall s \in \mathcal{S}, \forall  t \in \mathcal{T} \\
& \ \sum_{n=1}^{N} y_{n,t}^s = 1, \forall s \in \mathcal{S}, \forall  t \in \mathcal{T} \\
& \ \lim_{T \to \infty} \frac{1}{T} \sum_{t=1}^{T} \frac{1}{N} \sum_{n=1}^{N} p_{n,t} \ge P_{\mathrm{min} } \\
& \ \lambda_{n,t} (1-x_{n,t})< \mu_{n,t}, \forall  n \in \mathcal{N}, \forall  t \in \mathcal{T}
\end{align}

The decision variables in problem (P1) include the edge server selection strategy $y_{n,t}^s$, the adopted video configuration, the bandwidth allocation strategy $b_{n,t}$ and the computation resource allocation strategy $c_{n,t}$. Specifically, the video configuration consists of the video resolution $r_{n,t}$, the computation policy $x_{n,t}$ and the neural network model $m_{n,t}$.

In problem (P1), constraint (6) represents that the bandwidth allocated to cameras cannot exceed the available bandwidth $B_t^s$ of edge server $s$ in slot $t$. Constraint (7) represents the limitation of computation capacity $C_t^s$ on the edge server $s$ in slot $t$. Constraint (8) means that every camera should select one edge server to process video frames. Constraint (9) means that the average accuracy of all cameras should be higher than $P_{\mathrm{min} }$ in the long term. Constraint (10) means that the transmission rate of frames should be lower than the computation rate under the FCFS policy; otherwise, a long queuing delay occurs.

\textcolor{black}{
According to Section \ref{ave_aoi}, (P1) is an NP-hard mixed integer nonlinear programming problem, which is notoriously hard to solve. Consider a small system that consists of 10 cameras, 2 edge servers, 5 resolution candidate and 3 model candidate, the solution space size in this case is still up $O\left ( 30^{10T} \right ) $ even ignoring the resource allocation decisions. The commonly-used offline methods are not applicable to handle such a large solution space. Moreover, the lack of future information about the network condition and video content further complicates the problem. Therefore, an efficient online method is needed.}

In Section \ref{oad}, we present the LBCD method to handle the problem with asymptotically optimal performance. LBCD first transforms (P1) into many one-slot problems via Lyapunov framework, which enables real-time decision making with no future knowledge. LBCD then decouples each one-slot problem into configuration adaptation/resource allocation subproblem and edge server selection subproblem, which are efficiently solved via block coordinate descent and first-fit method, respectively.

\section{AoPI Analysis}
\label{ave_aoi}
In this section, we derive the average AoPI of camera \(n\) in slot \(t\) under FCFS and LCFSP policy. The subscript in \(\lambda_{n,t}\), \(\mu_{n,t}\) and \(p_{n,t}\) is removed for notation simplicity.

\begin{figure}[t]
\centering 
\includegraphics[width=0.8\linewidth]{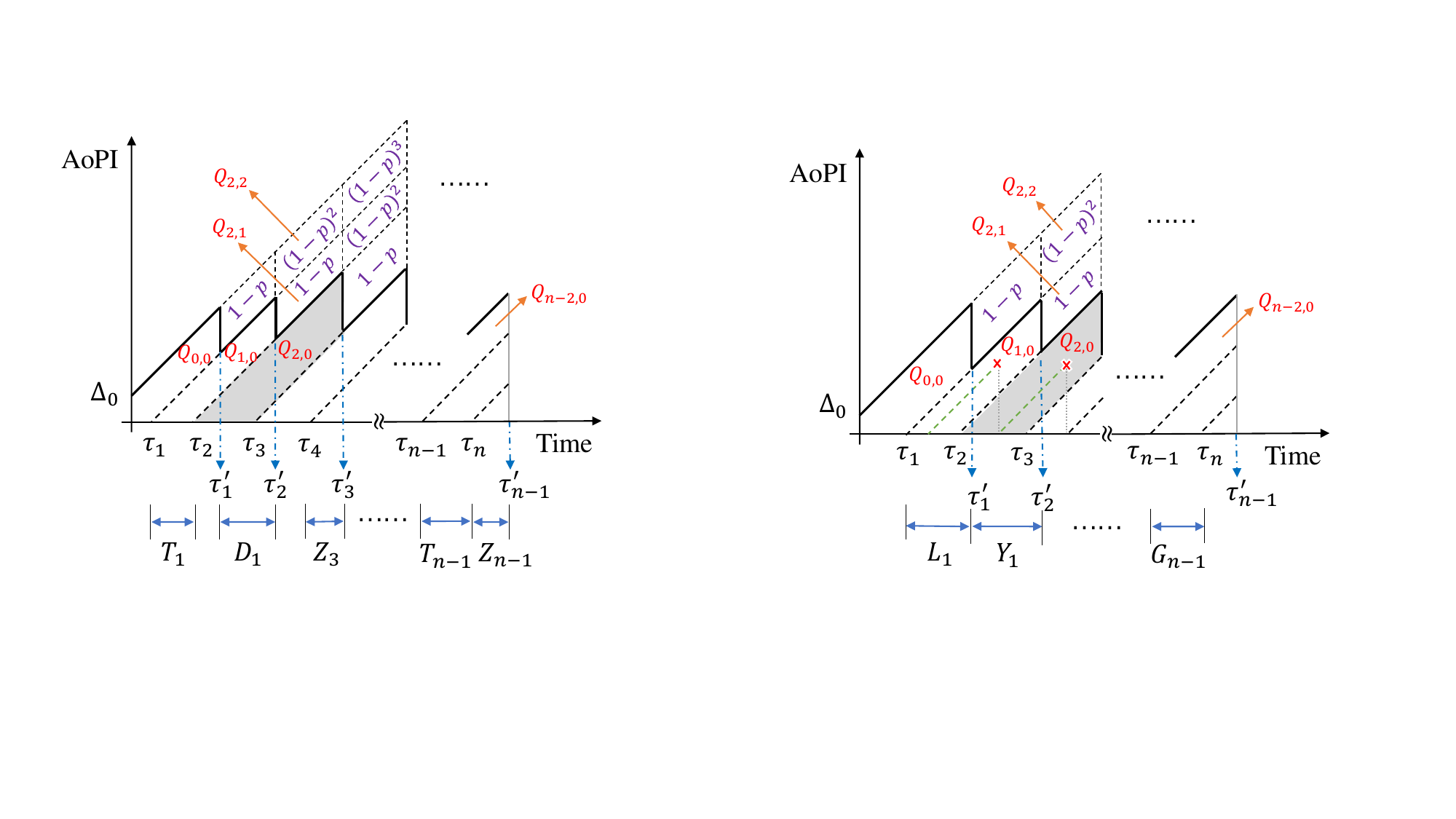} 
\caption{AoPI evolution under FCFS computation policy.}
\label{Fig.fcfs}
\end{figure}

\subsection{Average AoPI under FCFS Policy}
\begin{theorem}
If the transmission and computation delay of video frames follow exponential distributions with mean value \(1/\lambda\) and \(1/\mu\) respectively, the average AoPI of camera \(n\) in slot \(t\) under FCFS policy is:
\begin{equation}
A_\mathrm{F}=\left ( 1+\frac{1}{p}  \right ) \frac{1}{\lambda } +\frac{1}{\mu} +\frac{2\lambda ^3+\lambda \mu^2-\mu\lambda ^2}{\mu^4-\mu^2\lambda ^2} ,
\end{equation}
where \(p\) denotes the recognition accuracy.
\end{theorem}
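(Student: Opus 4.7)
The plan is to adapt the classical sawtooth-area argument for average AoI, restricted so that only accurately-recognized frames trigger drops. Index the accurately-recognized frames by $k$, and let $\tilde{t}_k^g$ and $\tilde{t}_k^r$ denote the generation and reception times of the $k$-th such frame; set $\tilde{Y}_{k+1}=\tilde{t}_{k+1}^g-\tilde{t}_k^g$ (the inter-generation gap between consecutive accurate frames) and $\tilde{T}_k=\tilde{t}_k^r-\tilde{t}_k^g$ (the end-to-end system time, i.e., transmission plus server sojourn). Under FCFS accurate frames are served in order, so between $\tilde{t}_k^r$ and $\tilde{t}_{k+1}^r$ the AoPI grows linearly from $\tilde{T}_k$ to $\tilde{Y}_{k+1}+\tilde{T}_{k+1}$, contributing a trapezoidal area $\tfrac{1}{2}[(\tilde{Y}_{k+1}+\tilde{T}_{k+1})^2-\tilde{T}_k^2]$. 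Under stationarity the $\tilde{T}_k^2$ terms cancel in expectation, so ergodicity gives the skeleton I would populate:
\[ A_{\mathrm{F}}=\frac{\tfrac{1}{2}\,\mathbb{E}[\tilde{Y}^2]+\mathbb{E}[\tilde{Y}\,\tilde{T}]}{\mathbb{E}[\tilde{Y}]}. \]

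The first-order pieces are easy to extract. Because each frame is accurate with probability $p$ independently of the queue, the number $L$ of frames between two consecutive accurate ones is $\mathrm{Geom}(p)$, and since each individual inter-generation gap at the camera is $\mathrm{Exp}(\lambda)$ (a new frame starts transmission as soon as the previous one finishes), $\tilde{Y}$ is a geometric sum of exponentials and is therefore itself $\mathrm{Exp}(p\lambda)$. Hence $\mathbb{E}[\tilde{Y}]=1/(p\lambda)$ and $\tfrac{1}{2}\mathbb{E}[\tilde{Y}^2]/\mathbb{E}[\tilde{Y}]=1/(p\lambda)$, which supplies the $1/(p\lambda)$ piece of the $(1+1/p)/\lambda$ prefactor. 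The complementary $1/\lambda$ is the mean transmission time $\mathbb{E}[T_r]=1/\lambda$, while the server---receiving a Poisson$(\lambda)$ stream with $\mathrm{Exp}(\mu)$ service---is a standard M/M/1 with marginal sojourn $\mathrm{Exp}(\mu-\lambda)$. Decomposing the sojourn mean as $1/(\mu-\lambda)=1/\mu+\lambda/[\mu(\mu-\lambda)]$ isolates the $1/\mu$ term in the announced formula; the remaining queueing excess, once corrected for $\tilde{Y}$-$\tilde{T}$ correlation in the next step, will assemble into the rational tail.

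The main obstacle is the cross moment $\mathbb{E}[\tilde{Y}\,\tilde{T}]$, since $\tilde{Y}$ and $\tilde{T}$ are negatively correlated through the Lindley recursion $T_n=(T_{n-1}-Y_n)^++S_n$: a long inter-accurate-frame gap tends to empty the queue and shorten the next sojourn. I would evaluate it by conditioning on $L$: given $L=\ell$, the $\ell$ inter-arrivals are i.i.d. $\mathrm{Exp}(\lambda)$ and the intermediate service times i.i.d. $\mathrm{Exp}(\mu)$, so iterating the Lindley recursion $\ell$ times yields the conditional joint law of $(\tilde{Y},\tilde{T})$. Memorylessness and PASTA keep each iterate tractable, because the queue length seen on arrival is geometric with parameter $1-\lambda/\mu$ independently of past inter-arrivals, and standard residual-life identities collapse the $(\cdot)^+$ into rational functions of $\lambda$ and $\mu$. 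Averaging against the weights $(1-p)^{\ell-1}p$ then produces $\mathbb{E}[\tilde{Y}\,\tilde{T}]$, which when combined with the moments above simplifies algebraically to the announced correction $(2\lambda^3+\lambda\mu^2-\mu\lambda^2)/(\mu^4-\mu^2\lambda^2)$. The arithmetic is heavy but purely mechanical once the conditional analysis is set up; the real content of the theorem is that the underlying Markov structure makes this joint moment computable in closed form at all.
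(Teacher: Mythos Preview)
Your sawtooth skeleton and the thinning to accurate frames are sound: the identity $A_{\mathrm F}=\bigl(\tfrac12\mathbb{E}[\tilde Y^{2}]+\mathbb{E}[\tilde Y\tilde T]\bigr)/\mathbb{E}[\tilde Y]$ together with $\tilde Y\sim\mathrm{Exp}(p\lambda)$ correctly isolates the $1/(p\lambda)$ contribution. The gap is entirely in the cross moment $\mathbb{E}[\tilde Y\tilde T]$, which you do not actually compute. Your proposed shortcut---that by PASTA ``the queue length seen on arrival is geometric \ldots\ independently of past inter-arrivals''---is false: PASTA gives the \emph{unconditional} law at an arrival epoch, but the moment you condition on the values of the preceding inter-arrivals (which is exactly what $\mathbb{E}[\tilde Y\tilde T]$ requires) the queue length is neither geometric nor independent of them. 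Without that independence, ``iterating Lindley $\ell$ times and averaging over $\ell$'' means propagating the joint law of $\bigl(\sum_{m=1}^{\ell}T_{i-m},\,W_i\bigr)$ through $\ell$ coupled steps, which is not the routine bookkeeping you describe.

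The paper avoids this by indexing over \emph{all} frames and writing the incremental area from inaccurate frames as $Q_{i,j}=D_i\,T_{i-j}$ with $D_i$ the inter-\emph{departure} time. The key structural fact is that in a stationary M/M/1 queue the past inter-arrival $T_{i-j}$ (for $j\ge 1$) and the inter-departure $D_i$ are independent, so $\mathbb{E}[D_iT_{i-j}]=1/\lambda^{2}$ for every $j$ and the $(1-p)^{j}$-weighted sum collapses at once. Only a \emph{single} Lindley evaluation is then needed, namely $\mathbb{E}[T_iW_{i+1}]$, which the paper computes by conditioning on $T_i$ and using the stationary sojourn law $Z_{i-1}\sim\mathrm{Exp}(\mu-\lambda)$. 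Translated into your parameterization, the same Burke-type argument (via $Z_i=Z_{i-1}+D_{i-1}-T_i$) shows that $\mathbb{E}[T_{i-m}Z_i]$ is \emph{constant} in $m\ge 1$; this is precisely why the $p$-dependence drops out of $\mathbb{E}[\tilde Y\tilde T]/\mathbb{E}[\tilde Y]$, and recognizing it reduces your plan to the paper's one-step computation rather than an $\ell$-fold iteration.
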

\begin{proof}
The AoPI evolution process under the FCFS policy is presented in Fig. \ref{Fig.fcfs}. Denote by \(\tau _i\) and \(\tau '_i\) the time instant of frame \(i\)'s generation at the camera and computation completion at the edge server, respectively. Denote by \(T_i\) and \(O_i\) the transmission and computation time of frame \(i\), respectively. Denote by \(Z_i\) the time that frame \(i\) stays in the edge server, \(Z_i=\tau '_{i}-\tau _{i+1}\). Denote by \(D_i=\tau '_{i+1}-\tau '_i\) the inter-departure time between frame \(i\) and \(i+1\) leaves the edge server.

There are multiple age evolution curves under the FCFS policy, as illustrated in Fig. \ref{Fig.fcfs}, which depends on whether the previous frames are accurately recognized or not. For convenience, we divide the whole area under all age evolution curves into different segments. Denote by \(Q_{i,0} \) the AoPI evolution area brought by accurately recognizing frame \(i\). Denote by \(Q_{i,j}\) the age evolution areas brought by inaccurately recognizing frames \(i-j+1,\cdots,i\). Hence:
\begin{equation}
\label{age_q_f}
Q_{i,0}=\frac{1}{2}T_i^2+T_i T_{i+1}+T_i Z_{i+1},
\end{equation}

\begin{equation}
\label{age_qq_f}
Q_{i,j}=D_i T_{i-j}.
\end{equation}

Without loss of generality, we assume the recognition results of different frames are independent. Then, under the FCFS policy, the segment \(Q_{i,j}\) exists with probability \((1-p)^{j}\). Thus, the accumulative AoPI from time 0 to \( \tau'_{n-1}\) is:
\begin{equation}
Q_\mathrm{F} =Q_{0,0}+\sum_{i=1}^{n-2} \sum_{j=0}^{i} (1-p)^{j}Q_{i,j}  +\frac{1}{2} \left ( T_{n-1} +Z_{n-1}\right )^2.
\end{equation}

Then, the average AoPI from time 0 to \( \tau'_{n-1}\) is:
\begin{equation}
\label{keyeqf}
\begin{split}
A_\mathrm{F}  =& \lim_{n \to \infty} \frac{Q_\mathrm{F} }{\tau'_{n-1}}
\\=&\lambda
 \lim_{n \to \infty}\frac{1}{n-2}\sum_{i=1}^{n-2}\sum_{j=1}^{i}(1-p)^j\mathbb{E}\left [ D_i T_{i-j} \right ] \\&+\lambda \left ( \frac{1}{2} \mathbb{E} \left [ T_i^2 \right ]+ \mathbb{E} \left [ T_i T_{i+1} \right ]+\mathbb{E} \left [ T_i Z_{i+1} \right ] \right ).
\end{split}
\end{equation}

The transmission time of frames follows exponential distributions. Hence, we have:
\begin{equation}
\label{trans_sq}
\mathbb{E} \left [ T_i^2 \right ] =\frac{2}{\lambda ^2} , \mathbb{E} \left [ T_i T_{i+1} \right ] =\mathbb{E} \left [ T_i \right ] \mathbb{E} \left [ T_{i+1} \right ]=\frac{1}{\lambda ^2} .
\end{equation}

Note that \(Z_{i+1}\) consists of the waiting time \(W_{i+1}\) and the computation time \(O_{i+1}\) of frame \(i+1\) on the edge server. Thus, \(\mathbb{E} \left [ T_i Z_{i+1} \right ]=\mathbb{E} \left [ T_i W_{i+1} \right ]+\mathbb{E} \left [ T_i O_{i+1} \right ]\). Recall that the computation time of frames follow exponential distributions. Hence, we have:
\begin{equation}
\label{transpro}
\mathbb{E} \left [ T_i O_{i+1} \right ]=\mathbb{E} \left [ T_i\right ]\mathbb{E} \left [ O_{i+1} \right ]=\frac{1}{\lambda\mu}.
\end{equation}

If frame \(i+1\) arrives at the edge server after \(i\)'s computation finishes, then frame \(i+1\) can be processed immediately and the waiting time \(W_{i+1}=0\). Otherwise, the waiting time of frame \(i+1\) is \(W_{i+1}=Z_{i}-T_{i+1}\). Thus:
\begin{equation}
\begin{split}
W_{i+1}=&\left ( Z_i-T_{i+1} \right ) ^+
\\=&\left ( W_i+O_i-T_{i+1} \right ) ^+
\\=&\left ( \left (Z_{i-1}-T_i \right )^++ O_i-T_{i+1}  \right ) ^+.
\end{split}
\end{equation}
Note that \(Z_{i-1}\) is independent of \(T_i\), \(T_{i+1}\) and \(O_i\). Therefore, the conditional expectation of \(W_{i+1}\) given \(T_i=\tau'\) is:
\begin{equation}
\begin{split}
\mathbb{E} &\left [ W_{i+1}|T_i=\tau' \right ] 
\\=&\iiint_{0}^{\infty }  f_Z(z) f_O(o) f_T(\tau)\left ( \left ( z-\tau' \right )^++o-\tau  \right ) ^+dz do d\tau
\\=&\frac{2\lambda }{(\mu+\lambda)(\mu-\lambda )} e^{(\lambda -\mu )\tau'}+\frac{\lambda }{\mu (\lambda +\mu )} ,
\end{split}
\end{equation}
where \( f_Z(z), f_O(o), f_T(\tau)\) denote the probability density function (PDF) of \(Z_i\), \(O_i\), and \(T_i\), respectively.

Next, we have:
\begin{equation}
\label{transwait}
\begin{split}
\mathbb{E}[T_i W_{i+1}]=&\int_{0}^{\infty } \tau f_T(\tau) \mathbb{E}[W_{i+1}|T_i=\tau]d\tau
\\=&\frac{2\lambda ^2+\mu ^2-\lambda \mu}{\mu^4-\mu^2\lambda ^2}.
\end{split}
\end{equation}

When the M/M/1 queue on the edge server reaches steady-state, the inter-arrival time \(T_{i-j}, 1\le j\le i\) is independent of the inter-departure time \(D_i\) \cite{DBLP:journals/corr/Zukerman13}. Therefore:
\begin{equation}
\label{trans_dep_mul}
\mathbb{E}\left [ D_i T_{i-j} \right ]=\mathbb{E}\left [ D_i\right ]\mathbb{E}\left [ T_{i-j} \right ]=\frac{1}{\lambda ^2} .
\end{equation}

Combine Eqs. (\ref{trans_sq}), (\ref{transpro}), (\ref{transwait}) and (\ref{trans_dep_mul}) with (\ref{keyeqf}), the average AoPI under the FCFS policy is:
\begin{equation}
\begin{split}
A _\mathrm{F} =\left (1+ \frac{1}{p} \right )\frac{1}{\lambda } +\frac{1}{\mu } + \frac{2\lambda ^3+\lambda \mu ^2-\mu\lambda^2 }{\mu^4-\mu^2\lambda ^2}.
\end{split}
\end{equation}
Theorem 1 is thus proved.
\end{proof}

According to Theorem 1, the AoPI under the FCFS policy is inversely proportional to the recognition accuracy. The influence of the transmission and computation rate are discussed as follows.

\begin{itemize}
    \item \textbf{Impact of transmission rate}
\end{itemize}

\begin{corollary} 
\(A _\mathrm{F} \) is a convex function with respect to \(\lambda\), which first decreases and then increases with \(\lambda\).
\label{cor-1}
\end{corollary}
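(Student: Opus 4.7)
The plan is to regard $A_\mathrm{F}$ as a function of $\lambda$ alone with $p$ and $\mu$ treated as constants, and to work on the feasible domain $\lambda \in (0,\mu)$ dictated by constraint (10) (so that the M/M/1 queue on the edge server is stable). I would then establish convexity by computing $d^2 A_\mathrm{F}/d\lambda^2$ directly and showing it is strictly positive on this interval, after which the ``decreases-then-increases'' shape follows from the boundary behavior.

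First I would split $A_\mathrm{F}$ into two pieces: the term $(1+1/p)/\lambda$, which is manifestly convex and strictly decreasing on $(0,\mu)$ and blows up as $\lambda\to 0^+$, and the rational queueing term
\begin{equation*}
g(\lambda)=\frac{2\lambda^3+\lambda\mu^2-\mu\lambda^2}{\mu^4-\mu^2\lambda^2}=\frac{\lambda(2\lambda^2-\mu\lambda+\mu^2)}{\mu^2(\mu-\lambda)(\mu+\lambda)}.
\end{equation*}
I would then simplify $g$, for instance via the substitution $\rho=\lambda/\mu\in(0,1)$, yielding $g=\rho(2\rho^2-\rho+1)/[\mu(1-\rho)(1+\rho)]$, which makes $\mu$ a mere scaling factor and reduces the analysis to a single-variable problem in $\rho$. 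Partial-fraction decomposition around the pole at $\rho=1$ would isolate a dominant $1/(1-\rho)$ term plus a smooth remainder on $[0,1)$, from which both $g(\lambda)\to+\infty$ as $\lambda\to\mu^-$ and the convexity of $g$ can be read off: the $1/(1-\rho)$ part is convex, and the remainder is easily checked to have nonnegative second derivative on $(0,1)$.

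Once convexity of $g$ is established, adding the convex first term preserves convexity, so $A_\mathrm{F}$ is convex on $(0,\mu)$. Combined with $A_\mathrm{F}\to+\infty$ at both endpoints (from the $1/\lambda$ term at $\lambda\to 0^+$ and from $g$ at $\lambda\to\mu^-$), a strictly convex function that diverges at both ends of an open interval must attain a unique interior minimizer $\lambda^\star\in(0,\mu)$, be strictly decreasing on $(0,\lambda^\star]$, and strictly increasing on $[\lambda^\star,\mu)$. This gives exactly the statement of Corollary~\ref{cor-1}.

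The main obstacle will be the convexity check for $g$: expanding $d^2 g/d\lambda^2$ directly produces a cumbersome rational expression, and one has to argue that its numerator is positive throughout $(0,\mu)$. I expect the cleanest route is the partial-fraction route above (isolating the singular $1/(\mu-\lambda)$ behaviour), which reduces the problem to verifying positivity of a low-degree polynomial on $(0,\mu)$; alternatively, one can bound the second derivative below by retaining only the manifestly convex singular term and bounding the remainder. Either route is routine once the decomposition is set up, but the algebra is the part most likely to consume space.
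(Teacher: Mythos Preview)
Your overall strategy is sound and considerably more detailed than what the paper offers (the paper simply asserts that the corollary follows ``by analyzing the derivatives of $A_\mathrm{F}$'' and gives no further argument). However, there is a concrete error in your decomposition step. Carrying out the partial fractions you propose gives, with $\rho=\lambda/\mu$,
\[
\mu\,g(\lambda)=h(\rho)=1-2\rho+\frac{1}{1-\rho}-\frac{2}{1+\rho},
\]
so the ``smooth remainder'' after extracting the $1/(1-\rho)$ pole is $1-2\rho-2/(1+\rho)$, whose second derivative is $-4/(1+\rho)^3<0$ on $(0,1)$. Hence $g$ by itself is \emph{not} convex on all of $(0,\mu)$; indeed $h''(\rho)=2/(1-\rho)^3-4/(1+\rho)^3$ is negative for small $\rho$ (e.g.\ $h''(0)=-2$). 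Your claim that ``the remainder is easily checked to have nonnegative second derivative on $(0,1)$'' therefore fails.

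The repair is easy and preserves your framework: do not try to prove $g$ convex in isolation, but compute the second derivative of the full $\mu A_\mathrm{F}$ in $\rho$,
\[
\frac{d^2}{d\rho^2}\bigl[\mu A_\mathrm{F}\bigr]=\frac{2(1+1/p)}{\rho^3}+\frac{2}{(1-\rho)^3}-\frac{4}{(1+\rho)^3}.
\]
Since $p\in(0,1]$ gives $1+1/p\ge 2$, the first term satisfies $2(1+1/p)/\rho^3\ge 4/\rho^3>4>4/(1+\rho)^3$ for every $\rho\in(0,1)$, so the sum is strictly positive. Convexity of $A_\mathrm{F}$ in $\lambda$ follows (the map $\lambda\mapsto\rho$ is affine), and your endpoint-divergence argument then yields the decreasing-then-increasing shape exactly as you described.
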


Corollary \ref{cor-1} can be proved by analyzing the derivatives of \(A _\mathrm{F} \). From corollary \ref{cor-1}, the transmission rate should be neither too low nor too high. Otherwise, the AoPI performance will degrade due to the long time interval between received recognition results or the high queuing delay.

Although the closed-form expression of optimal transmission rate \(\lambda^*\) is hard to obtain, we can prove that \(\lambda^*\) always decreases with the recognition accuracy \(p\). The insight is that more frames should be uploaded for compensation in the cases of low recognition accuracy. However, such compensation cannot maintain the AoPI level with the decline of \(p\).

\begin{itemize}
    \item \textbf{Impact of computation rate}
\end{itemize}

\begin{corollary} 
\(A _\mathrm{F} \) is a convex and monotone decreasing 
function with respect to \(\mu\).
\label{cor-3}
\end{corollary}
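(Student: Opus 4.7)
The plan is to verify directly that both the first derivative of $A_\mathrm{F}$ with respect to $\mu$ is negative and the second derivative is positive, under the standing stability condition $\mu > \lambda$ from constraint (10). Since the first term $(1+1/p)/\lambda$ in $A_\mathrm{F}$ does not depend on $\mu$, only the $\mu$-dependent tail $f(\mu)=\tfrac{1}{\mu}+\tfrac{2\lambda^3+\lambda\mu^2-\mu\lambda^2}{\mu^4-\mu^2\lambda^2}$ needs to be analyzed.

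My first step is to rewrite $f(\mu)$ in a form that makes sign analysis transparent. Observing that $\mu^4-\mu^2\lambda^2=\mu^2(\mu-\lambda)(\mu+\lambda)$ and splitting the numerator $2\lambda^3+\lambda\mu^2-\mu\lambda^2=\lambda\mu(\mu-\lambda)+2\lambda^3$, I expect the clean decomposition
\begin{equation}
f(\mu)=\frac{1}{\mu}+\frac{\lambda}{\mu(\mu+\lambda)}+\frac{2\lambda^{3}}{\mu^{2}(\mu^{2}-\lambda^{2})}.
\end{equation}
This reduces the corollary to proving that each of the three summands, regarded as a function of $\mu$ on $(\lambda,\infty)$, is positive, decreasing and convex. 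The first summand $1/\mu$ is trivially so. For the second, writing $\lambda/(\mu(\mu+\lambda))=1-\mu/(\mu+\lambda)$ up to a scalar, or simply differentiating twice, gives a manifestly negative first derivative and a positive second derivative. For the third summand the same conclusion follows by writing it as $2\lambda^{3}/\bigl(\mu^{2}(\mu-\lambda)(\mu+\lambda)\bigr)$: it is a product of four positive, decreasing, log-convex factors of $\mu$, hence itself log-convex and therefore convex, and clearly strictly decreasing.

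As a cleaner alternative, I would argue once and for all that any function of the form $\mu\mapsto \mu^{-a}(\mu-\lambda)^{-b}(\mu+\lambda)^{-c}$ with $a,b,c\ge 0$ and not all zero is strictly decreasing and log-convex, hence convex, on $(\lambda,\infty)$. Each of the three terms in the decomposition fits this template, so the sum is convex and strictly decreasing, and the corollary follows.

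The main obstacle I anticipate is purely algebraic: the useful splitting in the first displayed equation must be guessed correctly, because a direct computation of $\partial^{2}f/\partial\mu^{2}$ from the unsimplified closed form produces a rational expression whose numerator sign is not obvious. Once the splitting is in hand the monotonicity and convexity claims reduce to textbook facts about negative powers, and the stability condition $\mu>\lambda$ is only needed to keep the factor $(\mu-\lambda)^{-1}$ well-defined and positive.
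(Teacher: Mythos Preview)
Your proposal is correct and follows the same approach the paper indicates, namely analyzing the derivatives of $A_\mathrm{F}$ with respect to $\mu$; the paper itself gives no further detail beyond that one line. Your partial-fraction decomposition $f(\mu)=\mu^{-1}+\lambda\mu^{-1}(\mu+\lambda)^{-1}+2\lambda^{3}\mu^{-2}(\mu-\lambda)^{-1}(\mu+\lambda)^{-1}$ is valid on $(\lambda,\infty)$, and the log-convexity argument for each summand is a clean way to avoid the brute-force second-derivative computation the paper presumably has in mind.
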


Corollary \ref{cor-3} can be proved by analyzing the the derivatives of \(A _\mathrm{F} \). The insight is that reducing the computation delay of frames always leads to a lower AoPI.

\begin{itemize}
    \item \textbf{Transmission-computation resources trading}
\end{itemize}

\begin{figure}[tbp]
	\centering
	\begin{subfigure}{0.48\linewidth}
		\centering
		\includegraphics[width=0.95\linewidth]{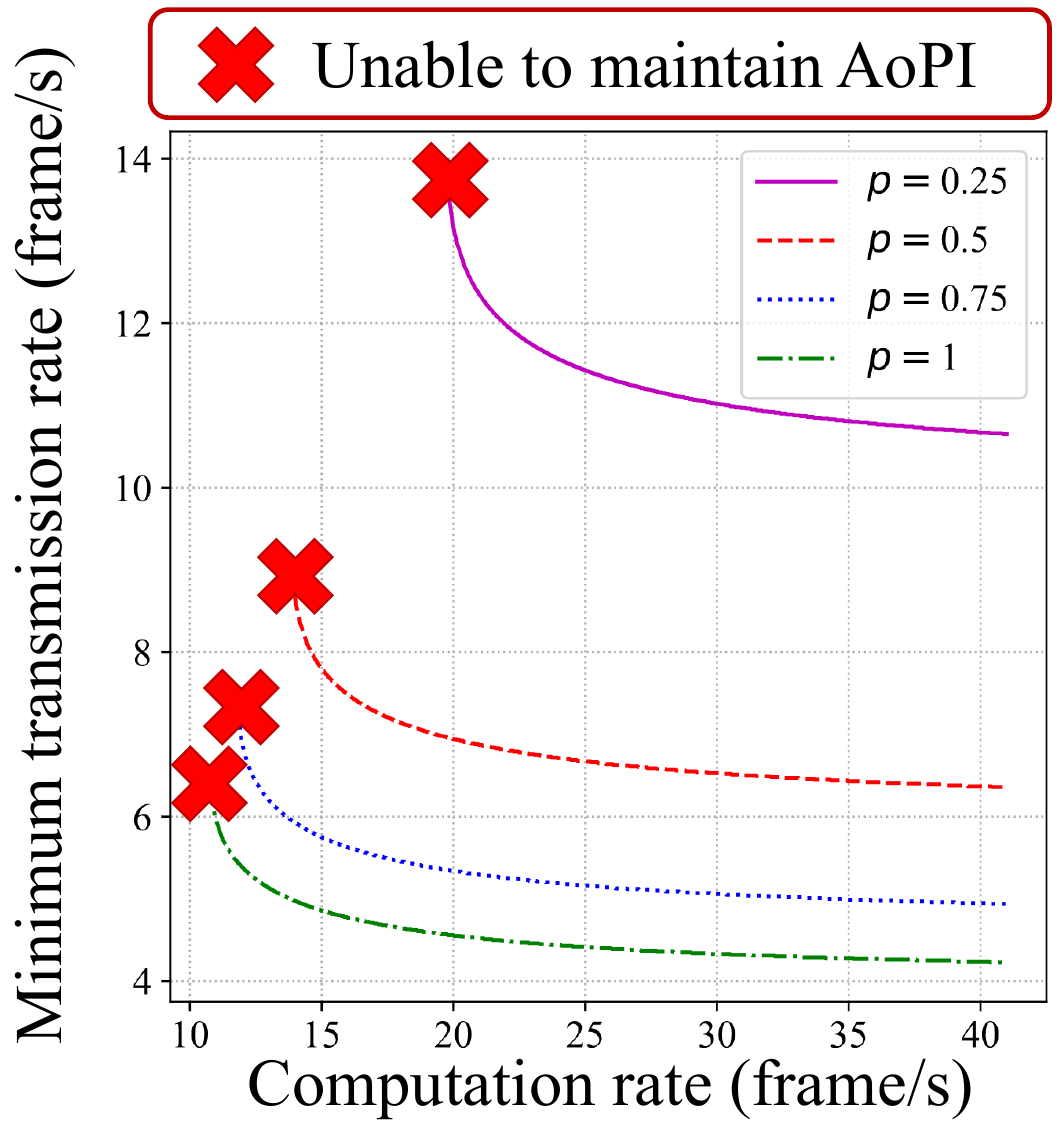}
        \caption{}
		\label{fcfs_trade_t}
	\end{subfigure}
	\centering
	\begin{subfigure}{0.48\linewidth}
		\centering
		\includegraphics[width=0.95\linewidth]{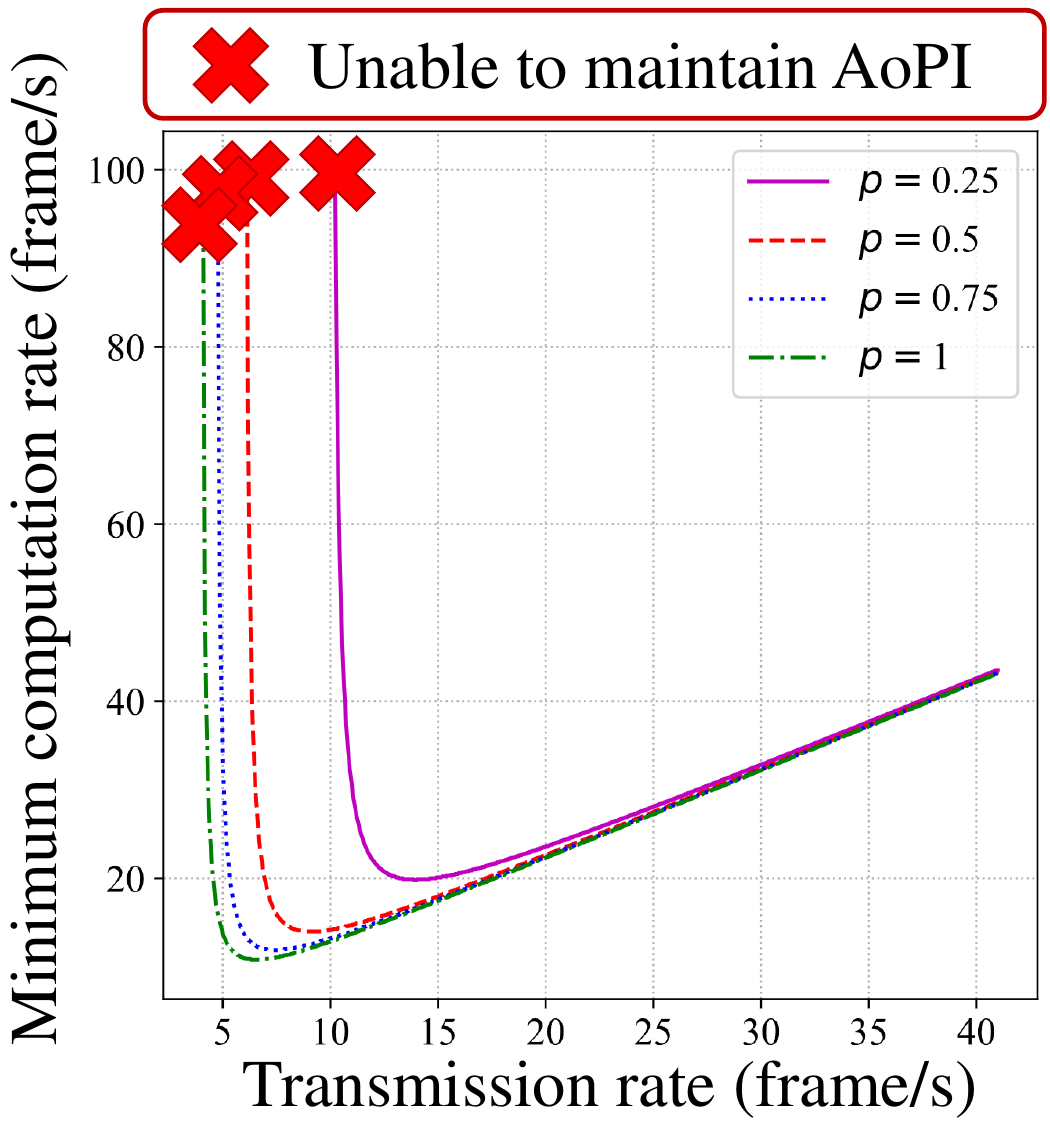}
        \caption{}
		\label{fcfs_trade_c}
	\end{subfigure}
	\caption{Minimum transmission rate (a) and computation rate (b) required to keep the average AoPI lower than 0.5s under the FCFS policy.}
	\label{fcfs_trade}
\end{figure}

Another interesting question is the minimum transmission and computation rate to satisfy a given average AoPI constraint. As illustrated in Fig. \ref{fcfs_trade}, the minimum transmission rate continuously decreases with the preserved computation rate. However, the minimum computation rate first decreases and then increases with the preserved transmission rate. The main reason is that the computation rate should increase in case of a high transmission rate to avoid an infinite queuing delay on the edge server.

\subsection{Average AoPI under LCFSP Policy}
\begin{theorem}
If the transmission and computation time of video frames follow exponential distributions with mean value \(1/\lambda\) and \(1/\mu\) respectively, the average AoPI of camera \(n\) in slot \(t\) under LCFSP policy is:
\begin{equation}
A_\mathrm{L}=\left ( 1+\frac{1}{p} \right ) \frac{1}{\lambda } +\frac{1}{p\mu} ,
\end{equation}
where \(p\) represents the recognition accuracy.
\end{theorem}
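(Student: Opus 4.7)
The plan is to mirror the structure of the proof of Theorem 1, adapting each step to exploit the no-waiting nature of LCFSP. First I would sketch an AoPI evolution diagram analogous to Fig.~\ref{Fig.fcfs}, noting that under LCFSP every newly-arriving frame immediately preempts the in-progress computation on the edge server. Because the computation time $O_i$ and the subsequent transmission time $T_{i+1}$ are both exponentially distributed, the race between them yields $\mathbb{P}(\text{frame }i\text{ completes processing}) = q := \mu/(\lambda+\mu)$. Combined with the independent accuracy probability $p$, each frame is ``effective'' (both completed and correctly recognized) with probability $pq$, while it fails to reduce the AoPI with probability $1-pq$.

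Next I would partition the area under the AoPI sawtooth into per-frame segments $Q_{i,j}$ along the template of Eqs.~(\ref{age_q_f})-(\ref{age_qq_f}), with $j$ counting the number of consecutive preceding ineffective frames. A key simplification over the FCFS analysis is the absence of a waiting-time term: each frame either starts processing immediately and finishes, or is preempted and discarded, so the analogue of $\mathbb{E}[T_i W_{i+1}]$ vanishes entirely. The relevant moments are $\mathbb{E}[T_i]=1/\lambda$, $\mathbb{E}[T_i^2]=2/\lambda^2$, and $\mathbb{E}[O_i \mid O_i < T_{i+1}] = 1/(\lambda+\mu)$ by the standard property of the minimum of two independent exponentials; the inter-departure time of successfully completed frames is again exponential by a memorylessness argument. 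Summing the per-segment contributions geometrically in $j$ with common ratio $1-pq$ produces the $1/p$ factor, and normalizing the accumulated expected area by the expected elapsed time $\tau'_{n-1}/n \to 1/\lambda$ via a renewal-reward argument gives the closed-form $A_L$.

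The principal obstacle will be correctly composing the two independent ineffectiveness mechanisms---preemption and misrecognition---so that they combine algebraically to produce the clean expression $(1+1/p)/\lambda + 1/(p\mu)$. Under FCFS only misrecognition matters, but under LCFSP a frame can fail either by being preempted (probability $1-q$) or by being misrecognized given completion (probability $1-p$). I expect the preemption factor $q = \mu/(\lambda+\mu)$ to combine with the conditional effective computation time $1/(\lambda+\mu)$ in a way that the apparent dependence on $\lambda+\mu$ cancels, leaving only the two simple reciprocals $1/\lambda$ and $1/\mu$. Verifying this cancellation, and confirming that preemption-induced correlations between consecutive segments do not disturb the renewal-reward normalization, will be the main algebraic hurdle.
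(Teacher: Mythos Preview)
Your plan to lump preemption and misrecognition into a single ineffectiveness probability $1-pq$ and reuse the segment template of Eqs.~(\ref{age_q_f})--(\ref{age_qq_f}) has a structural gap: those segments are anchored at \emph{departure} events (the quantities $Z_i$ and $D_i$ both require $\tau'_i$), but a preempted frame never departs, so for a preempted index $i$ the template is simply undefined. The paper sidesteps this by reindexing on the subsequence of \emph{completed} (non-preempted) frames only; then every indexed frame has a departure time, the geometric ratio is just $(1-p)^j$ over misrecognitions, and all preempted frames are absorbed into the inter-departure time $Y_i$ between consecutive completed frames. Computing the moments of $Y_i$ is the main new ingredient relative to Theorem~1: $Y_i$ is a geometric compound of truncated exponentials, handled via moment generating functions, yielding $\phi_{Y}(s)=\lambda\mu/[(\lambda-s)(\mu-s)]$, i.e.\ $Y\stackrel{d}{=}\mathrm{Exp}(\lambda)+\mathrm{Exp}(\mu)$. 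This directly contradicts your claim that the inter-departure time of completed frames is ``again exponential by a memorylessness argument''---it is hypoexponential. Correspondingly the normalizing rate is not $\lambda$ but the effective completion rate $\lambda_e=\lambda\mu/(\lambda+\mu)$, so your assertion $\tau'_{n-1}/n\to 1/\lambda$ is also off.

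That said, your $pq$ intuition is vindicated at the level of the final answer: one can rewrite the theorem's expression as $A_{\mathrm L}=(1/\lambda)\bigl(1+1/(pq)\bigr)$ with $q=\mu/(\lambda+\mu)$, so a pure renewal--reward argument with cycles between consecutive \emph{effective} departures (bypassing any per-frame $Q_{i,j}$ decomposition) would also succeed and would indeed produce the $1/(pq)$ factor you anticipate. The missing idea in your proposal is precisely how to handle the preempted frames' lack of a departure time within the segment picture; the paper's choice---reindex on completions and push the preemptions into $Y_i$ via the MGF---is one clean resolution, and a full renewal--reward computation on effective cycles would be another, but the hybrid you describe (all-frame indexing with the FCFS segment shapes and ratio $1-pq$) does not go through as stated.
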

\begin{proof}
Compared with the FCFS strategy, the LCFSP strategy allows preemption in service, i.e., only some frames can finish computation on the edge server. The AoPI evolution process under the LCFSP policy is presented in Fig. \ref{Fig.lcfsp}, where each symbol {\color{red} x } represents a preemption happens. 

We denote by \(\tau_i\) the generation time of the \(i\)-th frame that finishes the computation. The inter-generation time, \(G_i=\tau_{i+1}-\tau_{i}\), may contain the transmission time of multiple preempted frames, which is difficult to analyze. Hence, we focus on the inter-departure time \(Y_i=\tau'_{i+1}-\tau'_i\), where \(\tau'_i\) denotes the computation finishing time instant of the \(i\)-th frame that is not preempted. Denote by \(T_i\) and \(O_i\) the transmission and computation time of the \(i\)-th frame, respectively. Besides, we denote by \(L_i=\tau'_i-\tau_i\) the time between the \(i\)-th frame's generation and computation finishing.

The whole area under the age evolution curves is divided into different segments \(Q_{i,j}\) as depicted in Fig. \ref{Fig.lcfsp}, which follows the same method as in the proof of Theorem 1. We have:
\begin{equation}
\label{age_q_l}
Q_{i,0}=\frac{1}{2} \left ( L_i+Y_i \right ) ^2-\frac{1}{2} L_{i+1}^2,
\end{equation}
\begin{equation}
\label{age_qq_l}
\begin{split}
Q_{i,j}=\left ( L_{i-j}+Y_{i-j}-L_{i-j+1} \right ) Y_i.
\end{split}
\end{equation}

\begin{figure}[t] 
\centering 
\includegraphics[width=0.8\linewidth]{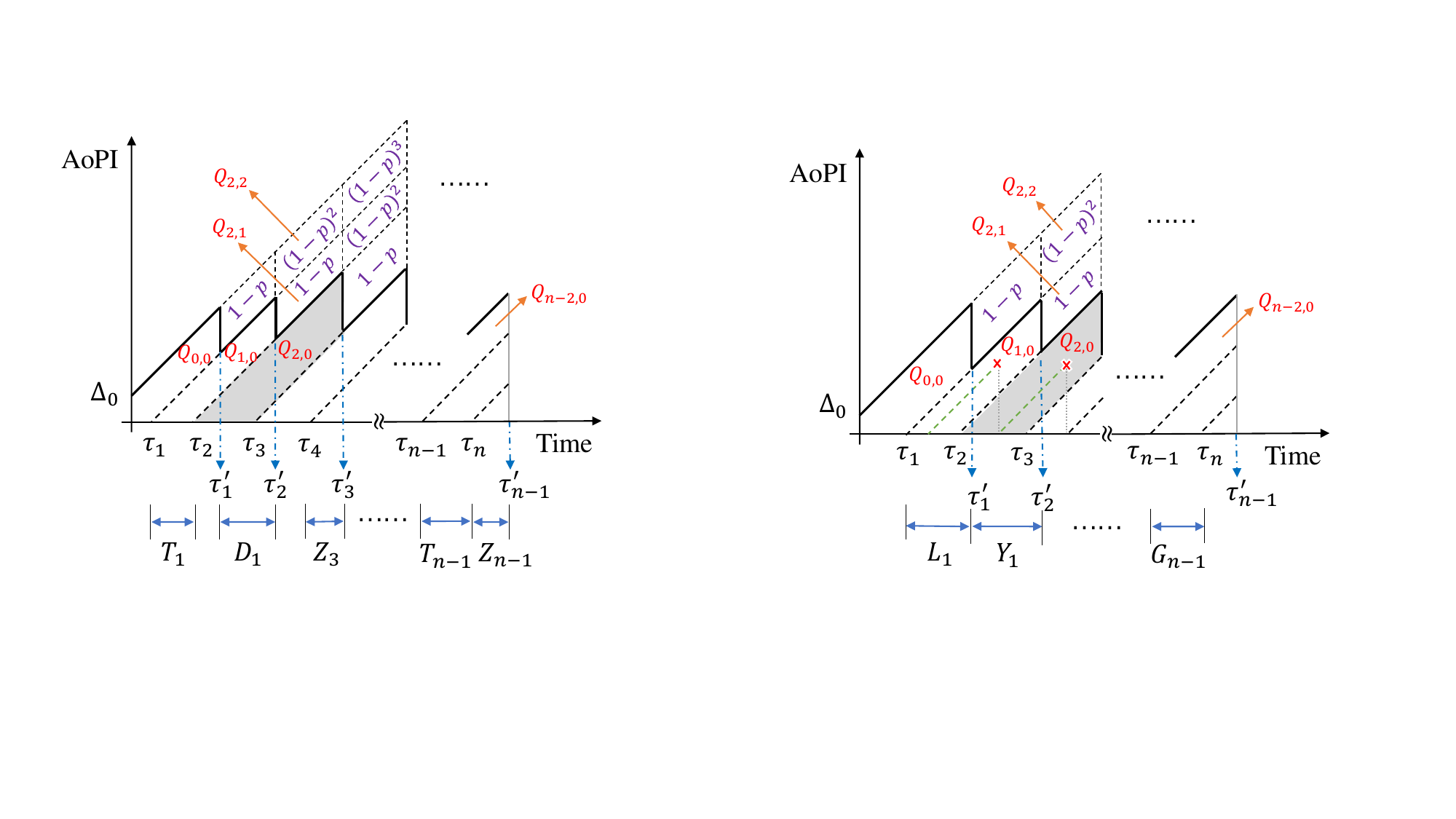} 
\caption{AoPI evolution under LCFSP computation policy.}
\label{Fig.lcfsp}
\end{figure}

In general, \(Q_{i,j}\) exists with probability \((1-p)^j\). Therefore, the accumulative AoPI from time 0 to \( \tau'_{n-1}\) is:
\begin{equation}
Q_\mathrm{L}=Q_{0,0}+\sum_{i=1}^{n-2}\sum_{j=0}^{i}(1-p)^j Q_{i,j} +\frac{1}{2} L_{n-1}^2.
\end{equation}
Then, the average AoPI from time 0 to \( \tau'_{n-1}\) is:
\begin{equation}
\label{ave_aoi_l}
\begin{split}
A_\mathrm{L}=&\lim_{n \to \infty} \frac{Q_\mathrm{L}}{\tau'_{n-1} } 
\\=&\lambda _e\left ( \mathbb{E} [L_i Y_i] +\frac{1}{2}\mathbb{E} [Y_i^2] \right ) \\&+\lambda _e\lim_{n \to \infty}\sum_{i = 1}^{n-2}\sum_{j = 1}^{i} (1-p)^{j} \mathbb{E} [G_{i-j}Y_i].
\end{split}
\end{equation}
where \(\lambda _e\) represents the effective arrival rate \cite{DBLP:journals/tit/CostaCE16}:
\begin{equation}
\label{eff_rat}
\begin{split}
\lambda _e=\lambda \cdot \mathbb{P} \left ( \left \{ \mathrm{not\ preempt\ frame}  \right \}  \right ) =\frac{\lambda\mu }{\lambda +\mu}.
\end{split}
\end{equation}

Next, we need to calculate \(\mathbb{E} [L_i Y_i]\), \(\mathbb{E} [Y_i^2]\) and \(\mathbb{E} [G_{i-j}Y_i]\). Firstly, we have:
\begin{equation}
\mathbb{E} [L_i Y_i]=\mathbb{E} [L_i]\mathbb{E} [Y_i],
\end{equation}
which is due to \(L_i\) and \(Y_i\) are independent. To prove this, we denote by \(H_i\) the time interval between frame \(i\)'s departure and the next generated frame's arrival at the edge server. Denote by \(R_i\) the remaining time for a new frame to complete computation. Note that the inter-departure time \(Y_i=H_i+R_i\). According to the definition, \(R_i\) is independent of \(T_i\) because there is no overlap between them. Moreover, the time \(H_i\) is also independent of \(L_i\) (please refer to \cite{DBLP:conf/isit/NajmN16}, Section 3-A).

Now we turn to the calculation of \(\mathbb{E} [L_i]\). We have:
\begin{equation}
\label{eci}
\mathbb{E} [L_i]=\mathbb{E} [T_i+O_i|O_i<T']=\mathbb{E} [T_i]+\mathbb{E} [O_i|O_i<T'],
\end{equation}
where the conditional expectation indicates the $i$-th frame's computation time should be smaller than the transmission time of the next frame to avoid preemption. We have:
\begin{equation}
\mathbb{E} [O_i|O_i<T']=\int_{0}^{\infty }o\cdot f_{O|O<T}(o) do ,
\end{equation}
and
\begin{equation}
\label{con_exp}
\begin{split}
&f_{O|O<T}(o)
\\&\overset{(a)}{=} \lim_{\eta  \to 0}\frac{\mathbb{P} (o\le O<o+\eta)\mathbb{P} (O<T|o\le O<o+\eta)}{\eta \mathbb{P} (O<T)}
\\&=(\lambda +\mu )e^{-(\lambda +\mu)o},
\end{split}
\end{equation}
where step (a) is derived according to Bayes rule. Therefore:
\begin{equation}
\label{sys_time}
\mathbb{E} [O_i|O_i<T']=\frac{1}{\lambda +\mu }.
\end{equation}

Then we need to calculate \(\mathbb{E} [Y_i]\) based on the moment generating function of \(Y_i\), which consists of the transmission time of multiple preempted frames and the computation time of the final frame. Similar to Eq. (\ref{con_exp}), the probability density function of the preempted frames' transmission time \(\hat{T}\) is:
\begin{equation}
\begin{split}
f_{\hat{T}}(\tau)=f_{T|T<O}(\tau)=(\lambda +\mu)e^{-(\lambda +\mu)\tau}
\end{split},
\end{equation}
and thus the moment generating function of \(\hat{T}\) is:
\begin{equation}
\label{mgft}
\phi _{\hat{T} }(s)=\int_{0}^{\infty }f_{\hat{T}}(\tau)e^{s\tau}d\tau=\frac{\lambda +\mu }{\lambda +\mu -s}  .
\end{equation}

Similarly, the computation time \(\hat{O}\) of the final frame in \(Y_i\) should be smaller than the transmission time of the next frame. Thus, the moment generating function of \(\hat{O}\) is:
\begin{equation}
\label{mgfp}
\phi _{\hat{O} }(s)=\int_{0}^{\infty }f_{\hat{O}}(o)e^{so}do=\frac{\lambda +\mu }{\lambda +\mu -s}  .
\end{equation}

Given Eqs. (\ref{mgft}) and (\ref{mgfp}), we can calculate \(Y_i\)'s moment generating function as follows. Note that the probability that a frame is not preempted is \(p_c=\frac{\mu}{\lambda +\mu} \) according to Eq. (\ref{eff_rat}). Denote by \(\tilde{T}\) the transmission time of the first frame generated after frame \(i\). Thus, we have:
\begin{equation}
Y_i=\tilde{T} +\sum_{i=0}^{K} \hat{T}_i +\hat{O} ,
\end{equation}
where \(K\) represents the number of preempted frames between departure of frame \(i\) and \(i+1\). Note that \(K\) is a random variable that follows geometric distribution and the moment generating function of \(\tilde{T} \) is \(\phi _{\tilde{T} }=\frac{\lambda  }{\lambda -s} \). Besides, \(K\), \(\tilde{T} \), \(\hat{T}\), and \(\hat{P}\) are all mutually independent. Therefore, the moment generating function of \(Y_i\) is:
\begin{equation}
\begin{split}
\phi _{Y_i}(s)&=\mathbb{E}[e^{s(\tilde{T} +\sum_{i=0}^{K} \hat{T}_i +\hat{O})}]
\\&=\mathbb{E}[e^{s\sum_{i=0}^{K} \hat{T}_i}]\phi _{\tilde{T} }(s)\phi _{\hat{O} }(s)
\\&=\sum_{i=0}^{\infty }p_c(1-p_c)^i \phi _{\hat{T} }(s)^i\phi _{\tilde{T} }(s)\phi _{\hat{O} }(s)
\\&=\frac{\lambda \mu}{(\lambda -s)(\mu -s)}.
\end{split}
\end{equation}

Given \(\phi _{Y_i}(s)\), we have:
\begin{equation}
\label{epit}
\mathbb{E} [Y_i]=\phi _{Y_i}'(s)|_{s=0}=\frac{\lambda+ \mu}{\lambda \mu} ,
\end{equation}

\begin{equation}
\label{dep_sq}
\mathbb{E} [Y_i^2]=\phi _{Y_i}''(s)|_{s=0}=\frac{2(\lambda+ \mu)}{\lambda^2 \mu^2} -\frac{2}{\lambda \mu} .
\end{equation}

Next, we need to consider \(\mathbb{E} [G_{i-j}Y_i]\). Because \(G_{i-j}=T_{i-j}+Y_{i-j}-T_{i-j+1}\), we have:
\begin{equation}
\label{ge_de_mul}
\mathbb{E} [G_{i-j}Y_i]=\mathbb{E} [Y_{i-j}Y_i]=\mathbb{E} [Y_i]^2,
\end{equation}
which is due to \(T_{i-j}\) and \(Y_{i-j}\) are all independent of \(Y_i\). 

Finally, combine Eqs. (\ref{eff_rat}), (\ref{eci}), (\ref{sys_time}), (\ref{epit}), (\ref{dep_sq}) and (\ref{ge_de_mul}) with (\ref{ave_aoi_l}), the average AoPI under the LCFSP policy is:
\begin{equation}
\begin{split}
A _\mathrm{L}=\left ( 1+\frac{1}{p}  \right ) \frac{1}{\lambda } +\frac{1}{p\mu }.
\end{split}
\end{equation}

Theorem 2 is thus proved.
\end{proof}

According to Theorem 2, the AoPI under the LCFSP policy is inversely proportional to the recognition accuracy \(p\), transmission rate \(\lambda\) and computation rate \(\mu\). Thus, increasing the transmission and computation rate always reduces the average AoPI. 

For the predefined average AoPI requirement, the required minimum transmission and computation rate is shown in Fig. \ref{lcfsp_trade}. As can be seen, the minimum transmission (resp. computation) rate always decreases with the preserved computation (resp. transmission) rate. The main reason is that both high transmission and computation rate help preempt processing obsolete data under the LCFSP policy.

\begin{figure}[tbp]
	\centering
	\begin{subfigure}{0.48\linewidth}
		\centering
		\includegraphics[width=0.95\linewidth]{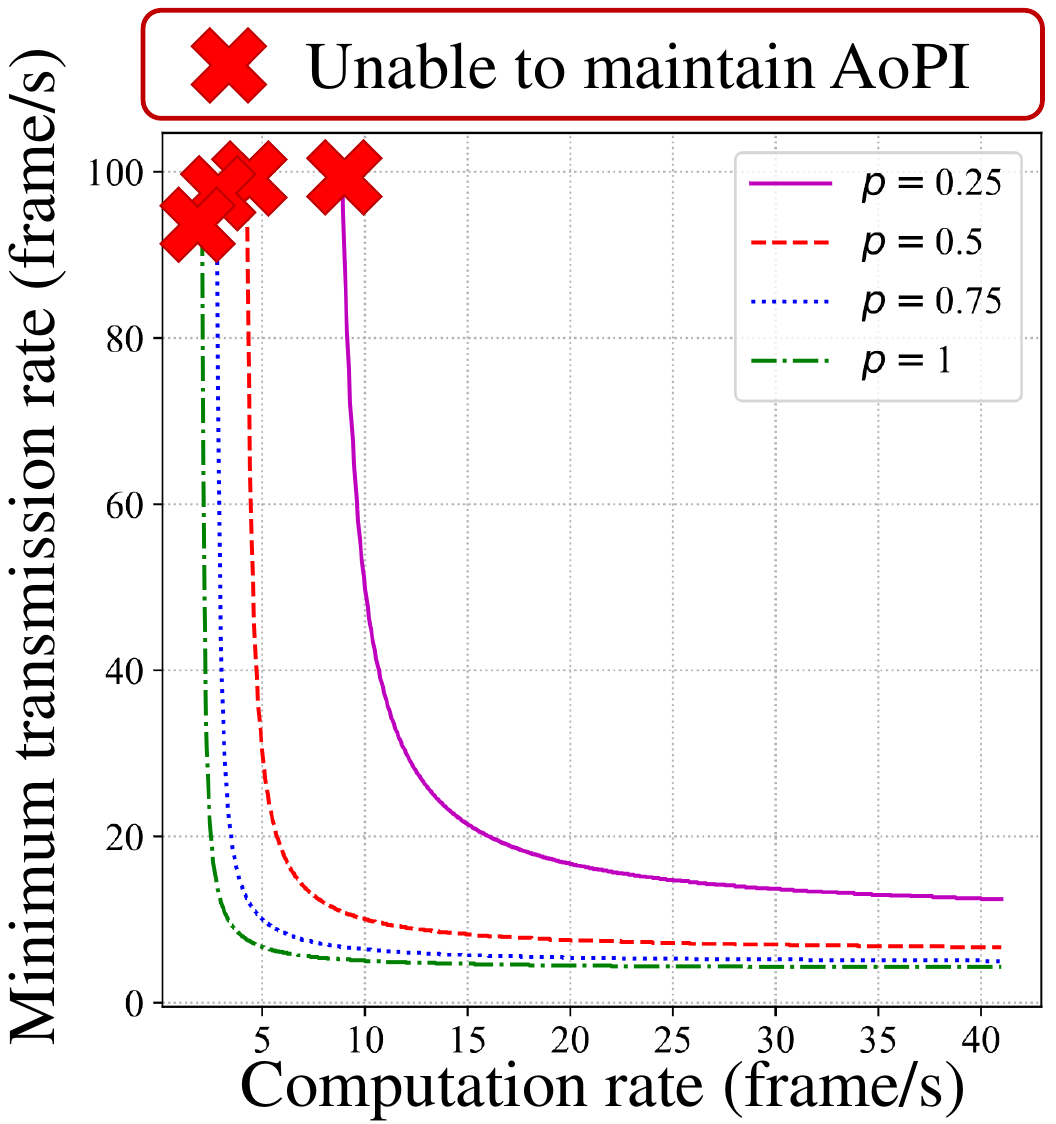}
        \caption{}
		\label{lcfsp_trade_t1}
	\end{subfigure}
	\centering
	\begin{subfigure}{0.48\linewidth}
		\centering
		\includegraphics[width=0.95\linewidth]{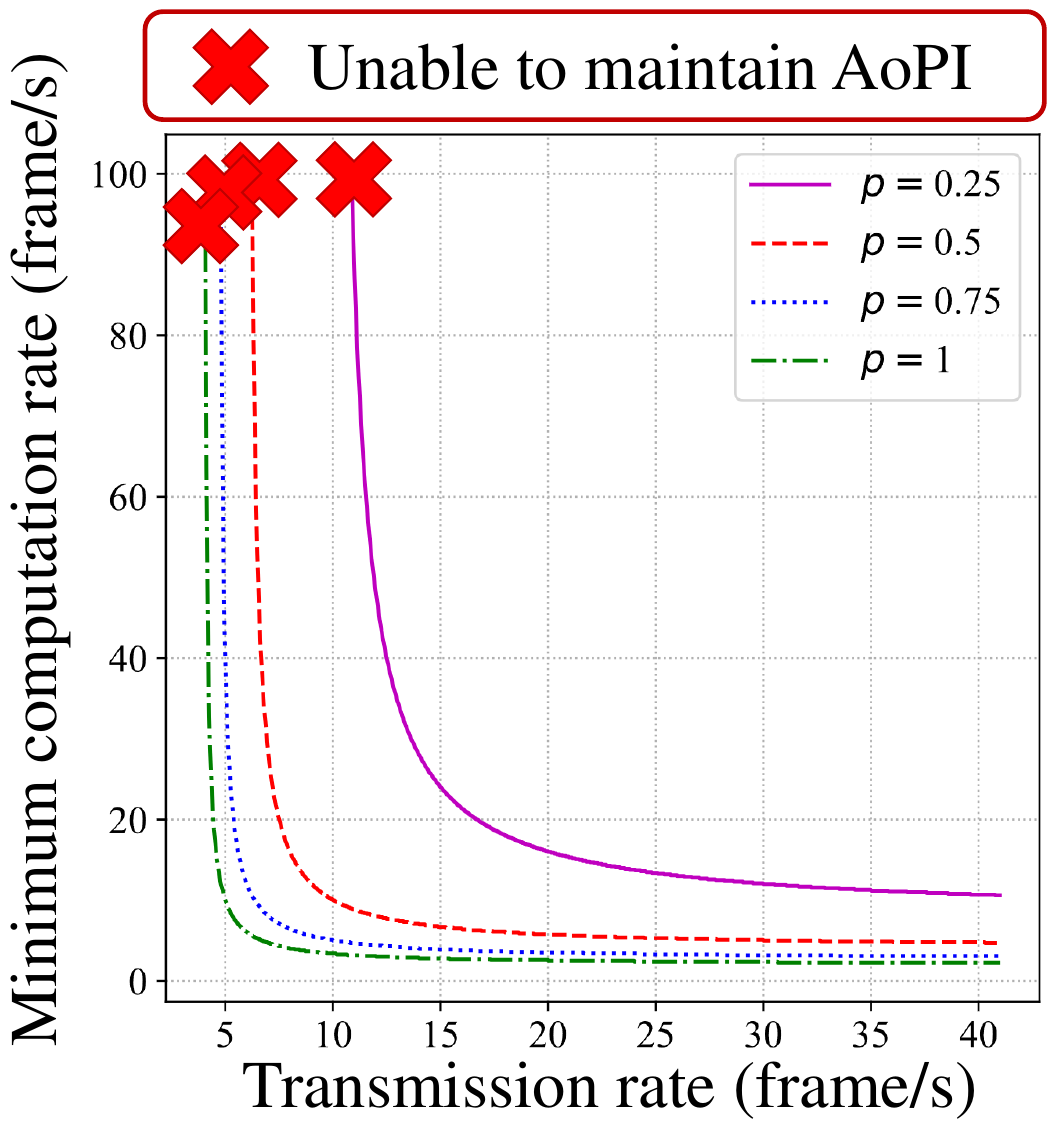}
        \caption{}
		\label{lcfsp_trade_c3}
	\end{subfigure}
	\caption{Minimum transmission rate (a) and computation rate (b) required to keep the average AoPI lower than 0.5s under the LCFSP policy.}
	\label{lcfsp_trade}
\end{figure}

\begin{figure}[t]
\centering
\includegraphics[width=0.65\linewidth]{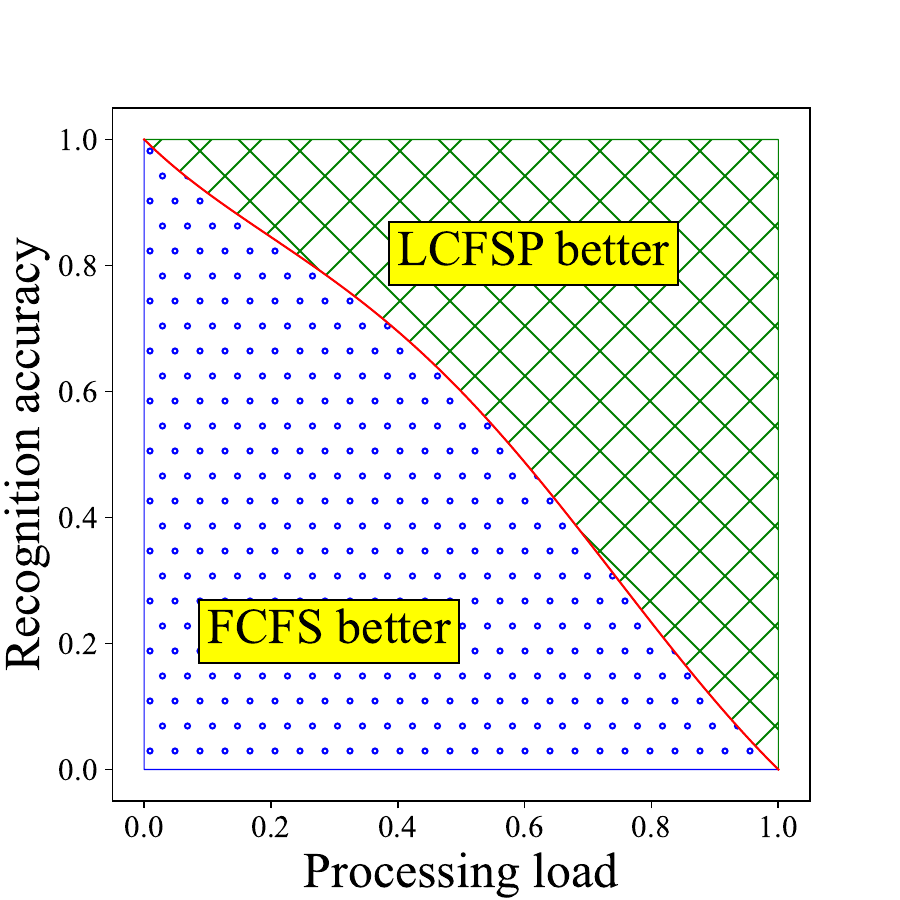}
\caption{The optimal policy with various recognition accuracy and processing load.}
\label{Fig.thre}
\end{figure}

\subsection{Comparison between Two Policies}

\begin{theorem}
For camera \(n\) in slot \(t\), the FCFS policy's AoPI exceeds the LCFSP policy if and only if the recognition accuracy \(p\) satisfies:
\begin{equation}
p \ge \frac{1-\rho ^2}{2\rho ^3-2\rho ^2+\rho +1}  ,
\end{equation}
where \(\rho=\lambda/\mu \) represents the processing load on edge server.
\end{theorem}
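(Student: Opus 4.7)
The plan is to form the difference $A_\mathrm{F} - A_\mathrm{L}$ directly from Theorems 1 and 2 and reduce the inequality $A_\mathrm{F} \geq A_\mathrm{L}$ to a closed-form condition on $p$. Crucially, both expressions share the common term $(1+1/p)/\lambda$, which cancels immediately. This eliminates the transmission-rate contribution tied to accuracy and leaves an inequality involving only $\mu$, $\lambda$, and the accuracy-weighted computation term $1/(p\mu)$ inherited from $A_\mathrm{L}$.

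First I would write $A_\mathrm{F} - A_\mathrm{L} \geq 0$ explicitly and, after cancellation, multiply through by $\mu > 0$ to obtain an equivalent inequality
\begin{equation*}
1 + \frac{2\lambda^3 + \lambda\mu^2 - \mu\lambda^2}{\mu^3 - \mu\lambda^2} \;\geq\; \frac{1}{p}.
\end{equation*}
Next I would introduce $\rho = \lambda/\mu$ and factor $\mu^3$ out of the numerator and denominator of the fraction, collapsing it to $(2\rho^3 - \rho^2 + \rho)/(1-\rho^2)$. Combining the ``1'' over the same denominator yields $(2\rho^3 - 2\rho^2 + \rho + 1)/(1-\rho^2) \geq 1/p$, and inverting (with the appropriate sign check) gives exactly the stated bound on $p$. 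Since every step is an equivalent rewriting or a multiplication/division by a positive quantity, the ``if and only if'' follows at once.

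The main obstacle, and the step that actually requires justification rather than algebra, is verifying that the two denominators are strictly positive on the feasible domain so that the direction of the inequality is preserved throughout. Constraint (10) for the FCFS policy forces $\lambda < \mu$, i.e., $\rho \in (0,1)$, so $1 - \rho^2 > 0$. For the cubic $g(\rho) = 2\rho^3 - 2\rho^2 + \rho + 1$ in the denominator of the final bound, I would compute $g'(\rho) = 6\rho^2 - 4\rho + 1$ and note that its discriminant $16 - 24 < 0$, so $g'(\rho) > 0$ for all real $\rho$; since $g(0) = 1 > 0$, the cubic is strictly positive on $(0,1)$. This justifies the sign-preserving inversion in the last step and completes the equivalence.
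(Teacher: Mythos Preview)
Your proposal is correct and follows exactly the approach the paper indicates, namely subtracting the expressions in Theorems~1 and~2 and simplifying; the paper's own proof is only the single line ``Proved by comparing Theorem 1 and 2.'' Your write-up is in fact more complete than the paper's, since you explicitly verify the positivity of $1-\rho^2$ (via constraint~(10)) and of $2\rho^3-2\rho^2+\rho+1$ (via the discriminant of its derivative), which is precisely what is needed to justify the sign-preserving inversion and hence the ``if and only if''.
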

\begin{proof}
Proved by comparing Theorem 1 and 2.
\end{proof}

Based on Theorem 3, it is proved that the LCFSP policy performs better when the recognition accuracy \(p\) is higher than a certain threshold, i.e., in the cases of high processing load and recognition accuracy, as depicted in Fig. \ref{Fig.thre}. The insight is that high transmission rate results in large queuing delay under the FCFS policy. This issue can be mitigated by abandoning the older frames, which makes the LCFSP policy achieve better performance. On the other hand, in the cases of low recognition accuracy, frequent preemption under the LCFSP policy makes the accurate results harder to appear, while FCFS policy ensures no accurate results are missed.

We build a prototype to validate the above analysis results. According to Section \ref{pv}, the theoretical AoPI under the FCFS and LCFSP policies (computed based on Theorem 1 and 2) matches the experimental AoPI well. We find that the analytical results are effective for choosing the optimal video configuration in practical systems.

\section{Online Algorithm Design}
\label{oad}
In Section \ref{ave_aoi}, we derived the closed-form expressions of the AoPI of camera \(n\) in slot \(t\) under FCFS and LCFSP policy, which are the objective function in problem (P1). As a mixed-integer programming problem, (P1) is extremely challenging to solve due to its NP-hard nature and the lack of future information about the dynamics of video contents and network conditions.

In this section, we propose a light-weighted online method, i.e., Lyapunov-based Block Coordinate Descent (LBCD). LBCD first transforms the long-term optimization problem (P1) into a series of one-slot problems based on the Lyapunov framework (Section \ref{lyf}). LBCD makes real-time decisions by solving these independent one-slot problems, which require no future information.

\textcolor{black}{
LBCD further decouples each one-slot problem into two subproblems. In Section \ref{vca}, Subproblem 1 corresponds to adjusting video configurations and allocating transmission/computation resources given the edge server selection strategy of cameras. The proposed Algorithm 1 converges to the optimal solution to Subproblem 1 due to the convexity in the AoPI function. In Section \ref{ess}, the Subproblem 2 corresponds to assigning cameras to different edge servers, which is solved by adopting Algorithm 2 to match the resource requirements of cameras and capacity of edge servers. This decoupling approach enables the iteration of all decision variables towards the global optimum in the most efficient manner. Specifically, we prove that the LBCD method can achieve asymptotically optimal performance.}

\subsection{Lyapunov Framework}
\label{lyf}
The long-term recognition accuracy constraint (9) in problem (P1) couples the edge server selection, video configuration adaptation and resource allocation decisions across different slots. To address this challenge, we first define a virtual queue \(q\left ( t \right )\) as a historical measurement of the recognition accuracy overflow:
\begin{equation}
\label{qt}
    q\left ( t+1 \right ) = \max \left [q\left ( t \right ) -\bar{P} _t  +P_{\mathrm{min} },0 \right ] ,
\end{equation}
where \(\bar{P} _t  =\frac{1}{N} \sum_{n=1}^{N} p_{n,t}\) and \(q\left ( 0 \right ) =0\) for initialization.

If the cameras continuously choose resolutions with low accuracy, \(q\left ( t \right )\) will significantly increase, leading to poor user experience. Hence, we define a quadratic Lyapunov function to help keep \(q\left ( t \right )\) stable:
\begin{equation}
\label{lf}
    L\left ( q\left ( t \right )  \right ) =\frac{1}{2} \left ( q\left ( t \right )  \right ) ^2.
\end{equation}
Then, we define the one-slot Lyapunov drift as:
\begin{equation}
\label{ld}
    \Delta \left ( q\left ( t \right )  \right ) =\mathbb{E} \left [ L\left ( q\left ( t+1 \right )  \right ) 
-L\left ( q\left ( t \right )  \right )|q\left ( t \right ) \right ] ,
\end{equation}
which represents the expected change of the Lyapunov function in time slot \(t\). A small \(\Delta \left ( q\left ( t \right )  \right )\) means that the virtual queue \(q\left ( t \right )\) keeps stable. Besides, \(\Delta \left ( q\left ( t \right )  \right )\) satisfies the following lemma.

\begin{lemma}
For all possible values of \(q\left ( t \right )\) obtained by any feasible strategy, the following statement holds:
\begin{equation}
    \Delta \left ( q\left ( t \right )  \right )
\le \frac{1}{2} + q\left ( t \right ) \mathbb{E} \left [ P_{\mathrm{min} } -\bar{P} _t |q\left ( t \right )\right ] .
\end{equation}
\end{lemma}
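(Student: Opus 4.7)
The plan is to follow the standard Lyapunov drift bounding technique applied to the virtual queue defined in equation~(\ref{qt}). I would start from the definitions in equations~(\ref{lf}) and~(\ref{ld}), and work directly on the squared queue length $q(t+1)^2$ using the update rule.

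First, I would exploit the key inequality $(\max[a,0])^2 \le a^2$, which holds for any real $a$. Applied to the queue recursion, this gives
\begin{equation}
q(t+1)^2 \le \bigl(q(t) - \bar{P}_t + P_{\min}\bigr)^2.
\end{equation}
Expanding the right-hand side and subtracting $q(t)^2$ yields
\begin{equation}
q(t+1)^2 - q(t)^2 \le (P_{\min} - \bar{P}_t)^2 + 2q(t)(P_{\min} - \bar{P}_t).
\end{equation}
Dividing by $2$ produces a per-slot bound on $L(q(t+1)) - L(q(t))$ in terms of quantities available at slot $t$.

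Next, I would bound the constant term $(P_{\min} - \bar{P}_t)^2$. Since recognition accuracy values $p_{n,t}$ lie in $[0,1]$, so does their average $\bar{P}_t$, and $P_{\min}\in[0,1]$ as well; hence $(P_{\min} - \bar{P}_t)^2 \le 1$, and the constant is bounded by $\tfrac{1}{2}$. Taking conditional expectation given $q(t)$ on both sides and noting that $q(t)$ is measurable w.r.t.\ itself delivers
\begin{equation}
\Delta(q(t)) \le \frac{1}{2} + q(t)\,\mathbb{E}\!\left[P_{\min} - \bar{P}_t \mid q(t)\right],
\end{equation}
which is exactly the claimed inequality.

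The argument is essentially routine Lyapunov algebra, so there is no deep obstacle. The only subtle point worth stating explicitly is the justification that $(P_{\min} - \bar{P}_t)^2 \le 1$, which relies on the physical interpretation of $\bar{P}_t$ and $P_{\min}$ as accuracies taking values in $[0,1]$; this should be acknowledged up front so that the constant on the right-hand side is clearly identified. Everything else is a direct consequence of the $(\max[a,0])^2 \le a^2$ trick and linearity of conditional expectation.
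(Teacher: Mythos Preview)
Your proposal is correct and follows essentially the same approach as the paper: both use the inequality $(\max[a,0])^2\le a^2$ on the queue update, expand the square, and bound $(P_{\min}-\bar P_t)^2\le 1$ using $P_{\min},\bar P_t\in[0,1]$. The only cosmetic difference is that the paper takes the conditional expectation before expanding, whereas you expand first and then take expectation.
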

\begin{proof}
    From Eq. (\ref{qt}) and (\ref{lf}), we have:
\begin{align}
\Delta(q(t))  &=\frac{1}{2} \mathbb{E}\left[(q(t+1))^{2}-(q(t))^{2} \mid q(t)\right] \nonumber \\
&\leq \frac{1}{2} \mathbb{E}\left[\left(q(t)-\bar{P} _t+P_{\mathrm{min} }\right)^{2}-(q(t))^{2} \mid q(t)\right] \nonumber \\
&=\frac{1}{2} \mathbb{E}\left[\left(P_{\mathrm{min} }-\bar{P} _t\right)^{2} \mid q(t)\right] \nonumber \\ &\quad+\mathbb{E}\left[q(t)\left(P_{\mathrm{min} }-\bar{P} _t\right) \mid q(t)\right] \nonumber \\
&\overset{(a)}{\leq}  \frac{1}{2}+q(t) \mathbb{E}\left[P_{\mathrm{min} }-\bar{P} _t \mid q(t)\right],
\end{align}
where step (a) is due to $P_{\mathrm{min} },\bar{P} _t\in\left [ 0,1 \right ] $.
\end{proof}

Next, we introduce a Lyapunov drift-plus-penalty function:
\begin{equation}
\label{ldpp}
    \Delta \left ( q\left ( t \right )  \right ) + V\cdot \mathbb{E} \left [ \bar{A} _t|q\left ( t \right )  \right ] ,
\end{equation}
where \(V\) is a positive parameter and $\bar{A} _t=\frac{1}{N} \sum_{n=1}^{N}A_{n,t}$. By optimizing the Lyapunov drift-plus-penalty function, we can balance the AoPI minimization objective and recognition accuracy constraint. However, it is extremely hard to directly optimize Eq. (\ref{ldpp}). Instead, we try to minimize its upper bound in each slot, which can be derived from Lemma 1:
\begin{align}
\label{lydftu}
&\Delta \left ( q\left ( t \right )  \right ) + V\cdot \mathbb{E} \left [ \bar{A} _t|q\left ( t \right )  \right ]
\nonumber \\ &\le \frac{1}{2} + q\left ( t \right ) \mathbb{E} \left [ P_{\mathrm{min} } -\bar{P} _t |q\left ( t \right )\right ] + V\cdot \mathbb{E} \left [ \bar{A} _t|q\left ( t \right )  \right ]. 
\end{align}
Finally, the new real-time optimization problem is:
\begin{align}
\mathbf{\left ( P2 \right )} & \min_{\left \{ \mathbf{y},\mathbf{r},\mathbf{x},\atop \mathbf{m}, \mathbf{b},\mathbf{c} \right \} } -q\left ( t \right ) \cdot \bar{P} _t +V\cdot \bar{A} _t
\\ & s.t. \ \left ( 6 \right ) \left ( 7 \right ) \left ( 8 \right ) \left ( 9 \right ) \left ( 10 \right ). \nonumber
\end{align}

In problem (P2), only the information of current slot $t$ is required. Besides, the additional term $-q\left ( t \right ) \cdot \bar{P} _t$ enables the trade-off between AoPI optimization and accuracy maintenance. Specifically, if previous slots choose to upload low-resolution videos, $q\left ( t \right )$ will increase, and enhancing accuracy becomes more critical.

As the original problem (P1) is transformed into a series of independent problems, we only need to solve problem (P2) in each slot. Nevertheless, problem (P2) is a mixed-integer non-linear programming problem, which is still NP-hard. Although some existing approaches, e.g., mathematical solver, can obtain near-optimal solution to problem (P2), these approaches cannot make decisions in real-time due to the extremely high time complexity. In this regard, we further decouple problem (P2) into two subproblems as follows.

\begin{algorithm}[t]
\caption{Adapting video configurations and allocating bandwidth and computation resources}
\label{lbcd}
\begin{algorithmic}[1]
\STATE Set an initial configuration selection and resource allocation strategy $\left \{ \mathbf{r}_t^0 ,\mathbf{x}_t^0, \mathbf{m}_t^0 ,\mathbf{b}_t^0 ,\mathbf{c}_t^0  \right \} $;
\FOR{$i=1$ to $M$}
\STATE {Given $\left \{ \mathbf{b}_t^{i-1} ,\mathbf{c}_t^{i-1}  \right \} $, compute the solution of problem (\ref{52}), $\left \{ \mathbf{r}_t^{i} ,\mathbf{x}_t^{i},\mathbf{m}_t^i   \right \} $, via exhaustive search;}
\STATE Given $\left \{ \mathbf{r}_t^{i} ,\mathbf{x}_t^{i}, \mathbf{m}_t^i ,\mathbf{c}_t^{i-1}  \right \} $, compute the solution of problem (\ref{53}), $\left \{ \mathbf{b}_t^{i}  \right \}, $ via interior point method;
\STATE Given $\left \{ \mathbf{r}_t^{i} ,\mathbf{x}_t^{i}, \mathbf{m}_t^i ,\mathbf{b}_t^{i}  \right \} $, compute the solution of problem (\ref{54}), $\left \{ \mathbf{c}_t^{i}  \right \} $, via interior point method;
\ENDFOR
\STATE Get the strategy $\left \{ \mathbf{r}_t^M ,\mathbf{x}_t^M ,\mathbf{r}_t^M ,\mathbf{b}_t^M ,\mathbf{c}_t^M  \right \}$;
\end{algorithmic}
\end{algorithm}

\subsection{Subproblem 1: Video Configuration Adaptation and Resource Allocation}
\label{vca}
In the first subproblem, the edge server selection strategy $y_{n,t}^s$ of cameras is given. We classify the remaining decision variables into three groups, i.e., video configuration $\left \{ \mathbf{r}_t ,\mathbf{x}_t,\mathbf{m}_t \right \} $, bandwidth allocation $\mathbf{b}_t=\left \{ b_{1,t},\cdots ,b_{n,t} \right \} $ and computation resource allocation $\mathbf{c}_t=\left \{ c_{1,t},\cdots ,c_{n,t} \right \} $. We have $\mathbf{r}_t=\left \{ r_{1,t},\cdots ,r_{n,t} \right \} $, $\mathbf{x}_t=\left \{ x_{1,t},\cdots ,x_{n,t} \right \} $ and $\mathbf{m}_t=\left \{ m_{1,t},\cdots ,m_{n,t} \right \} $. To tackle this subproblem, we employ the block coordinate descent method, which is renowned for its rapid convergence and near-optimal performance in non-linear programming domains. Specifically, we will fix two groups of variables and optimize the rest group. After all three groups are optimized, we start a new iteration of optimization. Such iteration ends when the maximum iteration number is reached. As summarized in Algorithm \ref{lbcd}, we perform the following steps during each iteration:
\begin{itemize}
    \item \textcolor{black}{\textbf{Optimizing video configuration decisions (Algorithm \ref{lbcd}, line 3):} given $\mathbf{b}_t $ and $\mathbf{c}_t $, we need to solve:
    \begin{align}
    \label{52}
    & \min_{\left \{ \mathbf{r},\mathbf{x},\mathbf{m} \right \} } -q\left ( t \right ) \cdot \bar{P} _t +V\cdot \bar{A} _t
    \\ & s.t. \ \left ( 6 \right ) \left ( 7 \right ) \left ( 8 \right ) \left ( 9 \right ) \left ( 10 \right ). \nonumber
    \end{align}}
    
    Note that the possible combinations of video resolution, computation policy and neural network model for each camera are very limited. This problem can be directly solved via exhaustive search.
    
    \item \textcolor{black}{\textbf{Optimizing bandwidth allocation variables (Algorithm \ref{lbcd}, line 4):} given $\mathbf{r}_t$, $\mathbf{x}_t$, $\mathbf{m}_t$ and $\mathbf{c}_t$, we need to solve:
    \begin{align}
    \label{53}
    & \min_{\left \{ \mathbf{b}\right \} } \ V\cdot \bar{A} _t
    \\ & s.t. \ \left ( 6 \right ) \left ( 7 \right ) \left ( 8 \right ) \left ( 9 \right ) \left ( 10 \right ). \nonumber
    \end{align}}

    Note that the bandwidth allocation strategy does not impact the recognition accuracy. According to Corollary 4.1 and Theorem 2, this problem is a constrained convex optimization problem. Although the closed-form solutions can hardly be derived due to $\bar{A} _t$ contains fraction of fourth degree polynomials, we can obtain the optimal solution via interior point methods.
    
    \item \textcolor{black}{\textbf{Optimizing computation resource allocation variables (Algorithm \ref{lbcd}, line 5):} given $\mathbf{r}_t$, $\mathbf{x}_t$, $\mathbf{m}_t$ and $\mathbf{b}_t$, we need to solve:
    \begin{align}
    \label{54}
    & \min_{\left \{ \mathbf{c} \right \} } \ V\cdot \bar{A} _t
    \\ & s.t. \ \left ( 6 \right ) \left ( 7 \right ) \left ( 8 \right ) \left ( 9 \right ) \left ( 10 \right ). \nonumber
    \end{align}}
    
    Similarly, the computation resource strategy does not impact the recognition accuracy. This problem is also a constrained convex optimization problem. The optimal solution can be obtained by interior point methods.
\end{itemize}

\textcolor{black}{
In the above steps, the problems (\ref{53}) and (\ref{54}) are constrained convex optimization problems. Besides, the problem (\ref{52}) can also be easily transformed into a convex optimization problem by modifying the video configuration variables to be continuous. Therefore, according to \cite{DBLP:journals/siamrev/BottouCN18}, the Algorithm 1 can converge to the global optimal solution to Subproblem 1.}

\begin{algorithm}[t]
\caption{Edge server selection}
\label{mlbcd}
\begin{algorithmic}[1]
\STATE {Assume all cameras connect with an virtual edge server whose capacity is the sum of all edge servers;}
\STATE {Obtain the virtual edge server's resource allocation decisions using Algorithm \ref{lbcd};}
\STATE {Set the size of each camera and the volume of each edge server according to Eq. (56) and (57), sort all cameras and edge servers;}
\FOR{each sorted camera $n$}
\STATE {Assign camera $n$ to the first edge server with sufficient remaining bandwidth and computation resources;}
\IF{All edge servers cannot meet camera $n$'s demand;}
\STATE Assign camera $n$ to the edge server with highest remaining resources;
\ENDIF
\ENDFOR
\STATE {Given the camera assignment strategy, re-compute the configuration selection and resource allocation decisions for all cameras using Algorithm \ref{lbcd};}
\end{algorithmic}
\end{algorithm}

\subsection{Subproblem 2: Edge Server Selection}
\label{ess}
The second subproblem focuses on how to efficiently assign cameras to different edge servers such that unnecessary resource contention can be minimized. We handle this challenge via taking the following steps in time each slot $t$:

\begin{itemize}   
    \item \textcolor{black}{\textbf{Obtaining the ideal resource allocation decisions (Algorithm \ref{mlbcd}, line 1-2):} This step gives the bandwidth/computation resource requirements of all cameras in the ideal environment. Consider a powerful virtual edge server $\vartheta $ whose capacity equals the sum of all real edge servers, i.e.,:
    \begin{align}
    B_t^\vartheta=\sum_{s=1}^{S} B_t^s,\
    C_t^\vartheta=\sum_{s=1}^{S} C_t^s.
    \end{align}
    Assume all cameras connect with the virtual edge server, we can obtain the corresponding configuration adaptation and resource allocation decisions $\left \{ \hat{\mathbf{r}}_t , \hat{\mathbf{x}}_t ,\hat{\mathbf{m}}_t ,\hat{\mathbf{b}}_t ,\hat{\mathbf{c}}_t  \right \} $ using Algorithm \ref{lbcd}.} 

     \item \textcolor{black}{\textbf{Assigning cameras to edge servers (Algorithm \ref{mlbcd}, line 3-9):} This step matches the resource requirements of cameras with the capacity of edge servers. This matching problem is a variant of 2D bin-packing problem, which is solved by the first-fit method. To balance the bandwidth and computation resources, we first define the \textit{size} $\phi _n$ of camera $n$ and the \textit{volume} $\psi _s$ of edge server $s$ as:
     \begin{align}
    &\phi _n=\frac{\hat{ b}_{n,t}}{\sum_{s=1}^{S} B_t^s}+\frac{\hat{ c}_{n,t}}{\sum_{s=1}^{S} C_t^s}\\
    &\psi _s=\frac{B_t^s}{\sum_{s=1}^{S} B_t^s}+\frac{C_t^s}{\sum_{s=1}^{S} B_t^s}.
    \end{align}
    Next, we sort all cameras and edge servers based on their size or volume. The sorted cameras are assigned to the first edge server with sufficient resources, i.e., first-fit method. If there does not exist such edge server, the cameras will choose the server with highest remaining resources.}

    \item \textcolor{black}{\textbf{Re-computing decisions for all cameras (Algorithm \ref{mlbcd}, line 10):} given the camera assignment strategy, we re-allocate bandwidth/computation resources on each edge server and select video configurations for all cameras using Algorithm \ref{lbcd}.}
\end{itemize}

This above steps can efficiently match resource-demanding cameras with high capacity edge servers and thus relieve the resource contention in each edge server.

\begin{algorithm}[t]
\caption{Lyapunov-based block coordinate descent (LBCD)}
\label{mlbcd1}
\begin{algorithmic}[1]
\FOR {Each time slot $t \in \mathcal{T} $}
\STATE {Obtain the available bandwidth and computation capacity of edge servers in current slot $t$;}
\STATE {Profile the recognition accuracy function $\zeta _n^t\left ( \cdot, \cdot  \right ) $ for each camera $n$;}
\STATE Solve problem (P2) using Algorithm 1 and 2 to obtain the edge server selection, video configuration adaptation,  bandwidth and computation resource allocation decisions;
\STATE Update $q\left ( t+1 \right ) $ based on Eq. (\ref{qt});
\ENDFOR
\end{algorithmic}
\end{algorithm}

\subsection{Analysis of LBCD}
The proposed LBCD method is summarized in Algorithm 3. At the beginning of each time slot, we first obtain the network conditions and profile the recognition accuracy of cameras. Then, we solve the corresponding one-slot problem (P2) to assign edge servers, adjust video configurations and allocate resources to cameras. Finally, we update the virtual queue $q\left ( t+1 \right ) $ according to Eq. (\ref{qt}) and move to the next slot.

We note that the time complexity of interior point method in Algorithm \ref{lbcd} is $O\left ( N^{3.5} \right ) $, where $N$ is the number of cameras. Hence, the time complexity of Algorithm \ref{lbcd} is $O\left ( MN^{3.5} \right ) $, where $M$ is the maximum number of optimization iterations. The time complexity of Algorithm 2 is $O\left ( MN^{3.5} + NS\right ) $, where $S$ is the number of edge servers. Generally, the proposed LBCD method consists of multiple executions of Algorithm 1 and 2. Thus, the time complexity of LBCD is $O\left ( MN^{3.5} + NS\right ) $.

\begin{theorem}
    LBCD achieves the following performance bounds in the long-term:
\begin{align}
\label{25}
&\frac{1}{T} \sum_{t = 1}^{T} \mathbb{E} \left [ \bar{A}_t  \right ] \le A_{\mathrm{opt} }+\frac{1}{V}\left ( \frac{1}{2} +\Phi _{\mathrm{max} } \right )  ,
\\& 
\label{pud}
\frac{1}{T} \sum_{t=1}^{T} \mathbb{E} \left [ \bar{P}_t  \right ] \ge P_{\mathrm{min} }-\frac{1}{\epsilon } \left ( \frac{1}{2} +V\cdot A_{\mathrm{max} } \right ) ,
\end{align}
where $A_{\mathrm{opt} }$ and $A_{\mathrm{max} }$ represent the optimal and the worst AoPI in problem (P1), respectively. Besides, $\Phi _{\mathrm{max} }$ represents the maximum gap between LBCD and the optimal objective function value in problem (P2). The positive constant $\epsilon $ represents the long-term recognition accuracy surplus achieved by a stationary control strategy.
\end{theorem}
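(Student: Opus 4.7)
The plan is to follow the standard Lyapunov drift-plus-penalty analysis, suitably modified to account for the fact that LBCD does not exactly minimize the one-slot objective in (P2) but rather minimizes it up to an additive gap $\Phi_{\mathrm{max}}$ (inherited from Algorithm 1's block-coordinate descent and Algorithm 2's first-fit assignment). First I would revisit Lemma 1 and add $V\cdot\mathbb{E}[\bar{A}_t\mid q(t)]$ to both sides to obtain the drift-plus-penalty inequality
\begin{equation}
\Delta(q(t)) + V\cdot\mathbb{E}[\bar{A}_t\mid q(t)] \le \tfrac{1}{2} + q(t)\mathbb{E}[P_{\mathrm{min}}-\bar{P}_t\mid q(t)] + V\cdot\mathbb{E}[\bar{A}_t\mid q(t)].
\end{equation}
Since LBCD selects a feasible decision whose one-slot objective $-q(t)\bar{P}_t+V\bar{A}_t$ is within $\Phi_{\mathrm{max}}$ of the true minimizer of (P2), and since the true minimizer's value is no larger than the value attained by \emph{any} other feasible (in particular, any stationary randomized) policy, I can upper-bound the right-hand side by the drift-plus-penalty induced by an appropriately chosen reference policy, plus $\Phi_{\mathrm{max}}$.

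For the AoPI bound (\ref{25}), the reference policy is the stationary randomized policy $\pi^{\star}$ that achieves $\mathbb{E}[\bar{A}_t]=A_{\mathrm{opt}}$ and satisfies $\mathbb{E}[\bar{P}_t]\ge P_{\mathrm{min}}$; its existence follows from the standard Lyapunov optimization result that the time-average optimum of (P1) can be attained by a stationary randomized decision over the i.i.d.\ (or ergodic) system state. Substituting $\pi^{\star}$ into the right-hand side makes the $q(t)\mathbb{E}[P_{\mathrm{min}}-\bar{P}_t\mid q(t)]$ term non-positive, yielding
\begin{equation}
\Delta(q(t)) + V\cdot\mathbb{E}[\bar{A}_t\mid q(t)] \le \tfrac{1}{2} + \Phi_{\mathrm{max}} + V\cdot A_{\mathrm{opt}}.
\end{equation}
Taking total expectations, summing from $t=1$ to $T$, telescoping the drift (so that the $L(q(T{+}1))-L(q(1))\ge -L(q(1))$ contribution is non-negative and can be dropped), dividing by $VT$, and letting $T\to\infty$ yields (\ref{25}).

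For the accuracy bound (\ref{pud}), I would invoke a second reference policy $\pi^{\diamond}$ that delivers $\epsilon$-slack on the accuracy constraint, i.e.\ $\mathbb{E}[\bar{P}_t^{\diamond}]\ge P_{\mathrm{min}}+\epsilon$. Plugging $\pi^{\diamond}$ into the drift-plus-penalty bound gives
\begin{equation}
\Delta(q(t)) + V\cdot\mathbb{E}[\bar{A}_t\mid q(t)] \le \tfrac{1}{2} + \Phi_{\mathrm{max}} - \epsilon\, q(t) + V\cdot A_{\mathrm{max}}.
\end{equation}
Since $\bar{A}_t\ge 0$, this forces the drift to be strongly negative whenever $q(t)$ is large, which via a standard mean-rate-stability argument (telescoping and dividing by $T$) implies $\lim_{T\to\infty}\mathbb{E}[q(T)]/T=0$ and moreover produces the explicit lower bound (\ref{pud}) on the long-term accuracy after unwinding the virtual-queue recursion $q(t{+}1)\ge q(t)+P_{\mathrm{min}}-\bar{P}_t$.

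The main obstacle I anticipate is the justification of the gap term $\Phi_{\mathrm{max}}$: I need to argue that Algorithms 1 and 2, although heuristic in general, yield a \emph{bounded} per-slot suboptimality gap for problem (P2) under the physical constraints (bounded bandwidth/computation, finite configuration set, bounded $A_{n,t}$). Algorithm 1 was shown to converge to the global optimum of Subproblem 1 via convexity, so $\Phi_{\mathrm{max}}$ is essentially driven by the first-fit assignment in Algorithm 2, and the boundedness follows from the finite cardinality of the server/camera matching space together with uniformly bounded AoPI values. Once this is handled, the remainder is a routine application of the Lyapunov drift-plus-penalty machinery.
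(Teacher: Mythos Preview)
Your proposal is correct and follows essentially the same route as the paper: both arguments apply Lemma~1, add the $\Phi_{\mathrm{max}}$ slack from the approximate solution of (P2), compare against two stationary randomized reference policies (one achieving $A_{\mathrm{opt}}$ with the accuracy constraint tight, one achieving $\epsilon$-slack on accuracy), telescope the drift over $t=1,\dots,T$, and unwind the virtual-queue recursion to recover the accuracy bound. The only noticeable difference is that you carry $\Phi_{\mathrm{max}}$ into the second (accuracy) inequality as well, which would yield $P_{\mathrm{min}}-\frac{1}{\epsilon}(\tfrac12+\Phi_{\mathrm{max}}+V A_{\mathrm{max}})$ rather than the stated bound; the paper simply drops $\Phi_{\mathrm{max}}$ at that step, so your version is arguably the more careful one, though slightly weaker than the theorem as written.
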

\begin{proof}
According to \cite{DBLP:series/synthesis/2010Neely}, for any $\delta >0$, there exists a stationary and randomized policy $\Pi $ for problem (P1), which satisfies the following statement under the Lyapunov framework:
\begin{align}
\label{27}
\mathbb{E} \left [ A_t^{\Pi } \right ] \le A_t^*+\delta ,\\
\label{28}
\mathbb{E} \left [ P_{\mathrm{min} }-P_t^{\Pi } \right ] \le\delta,
\end{align}
where $A_t^*$, $A_t^{\Pi }$ and $P_t^{\Pi }$ represent the optimal AoPI in the one-slot problem (P2), the average AoPI and recognition accuracy of all cameras under the policy $\Pi $, respectively.

Next, the drift-plus-penalty function of LBCD satisfies:
\begin{align}
\Delta(q(t))+V \mathbb{E}\left[\bar{A}_t \mid q(t)\right] \leq \frac{1}{2}+\Phi_{\mathrm{max} }+\delta q(t)+V\left(A_t^*+\delta\right).
\end{align}

Sum the above results from slot $1$ to slot $T$, we have:
\begin{align}
&\sum_{t=1}^{T} \Delta(q(t))+V \sum_{t=1}^{T} \mathbb{E}\left[\bar{A}_t \mid q(t)\right] \nonumber
\\&=\mathbb{E}[L(q(T))-L(q(0))]+V \sum_{t  = 1}^{T} \mathbb{E}\left[\bar{A}_t\right] \nonumber
\\&\leq \left ( \frac{1}{2} +\Phi_{\mathrm{max} } \right )  T +\delta \sum_{t  = 1}^{T} q(t)+V \sum_{t  = 1}^{T}\left(A_t^*+\delta\right).
\end{align}

Given that $\mathbb{E}[L(q(T))-L(q(0))]>0$ and let $\delta \to 0$, we have:
\begin{equation}
\frac{1}{T} \sum_{t = 1}^{T} \mathbb{E} \left [ \bar{A}_t  \right ] \le A_{\mathrm{opt} }+\frac{1}{V}\left ( \frac{1}{2} +\Phi_{\mathrm{max} } \right ) ,
\end{equation}
where $A_{\mathrm{opt} }=\frac{1}{T} \sum_{t  = 1}^{T}A_t^*$. Eq. (\ref{25}) is thus proved.

Next, according to \cite{DBLP:series/synthesis/2010Neely}, there exists a constant $\epsilon >0$ and a policy $\Lambda $ for problem (P1) under the Lyapunov framework that satisfies:
\begin{align}
& \mathbb{E} \left [ A_t^{\Lambda } \right ] = \Psi \left ( \epsilon  \right ), \\
& \mathbb{E} \left [ P_{\mathrm{min} } - P_t^{\Lambda }\right ] \le -\epsilon .
\end{align}

Thus, the drift-plus-penalty function of LBCD satisfies:
\begin{align}
\label{dppel}
\Delta \left ( q\left ( t \right )  \right ) + V\cdot \mathbb{E} \left [ \bar{A} _t|q\left ( t \right )  \right ]
\le \frac{1}{2} -\epsilon  q\left ( t \right ) +V\Psi \left ( \epsilon  \right ).
\end{align}

Sum the above results from slot $1$ to slot $T$, we have:
\begin{align}
&\mathbb{E}[L(q(T))-L(q(0))]+V \sum_{t  = 1}^{T} \mathbb{E}\left[\bar{A} _t\right] \nonumber
\\&\leq\frac{T}{2} - \epsilon \sum_{t  = 1}^{T} q(t)+VT\Psi \left ( \epsilon  \right ).
\end{align}

Given that $\mathbb{E}[L(q(T))-L(q(0))]>0$ and let $\delta \to 0$ and $\Psi(\epsilon),\bar{A} _t\in \left [ 0,A_{\mathrm{max}} \right ] $, we have:
\begin{align}
\lim _{T \rightarrow \infty} \frac{1}{T} \sum_{t=1}^{T} q(t) &\leq \frac{1}{2\epsilon}+\frac{V}{\epsilon}\left[\Psi(\epsilon)-\frac{1}{T} \sum_{t=1}^{T} \mathbb{E}\left[\bar{A} _t\right]\right]
\nonumber \\& 
\label{36}
\leq \frac{1}{2\epsilon}+\frac{V}{\epsilon}A_{\mathrm{max}}.
\end{align}

Recall that:
\begin{align}
\lim _{T \rightarrow \infty} \frac{1}{T} \sum_{t=1}^{T} q(t) &=\lim _{T \rightarrow \infty} \frac{1}{T} \sum_{t=1}^{T} \max \left [q\left ( t \right ) -\bar{P} _t  +P_{\mathrm{min} },0 \right ] 
\nonumber\\&
\label{37}
\ge P_{\min }-\lim _{T \rightarrow \infty} \frac{1}{T} \sum_{t=1}^{T}\bar{P} _{t-1}.
\end{align}

Combine Eq. (\ref{36}) with (\ref{37}), we have:
\begin{equation}
\frac{1}{T} \sum_{t=1}^{T} \mathbb{E} \left [ \bar{P}_t  \right ] \ge P_{\mathrm{min} }-\frac{1}{\epsilon } \left ( \frac{1}{2} +V\cdot A_{\mathrm{max} } \right ) .
\end{equation}

Theorem 4 is thus proved.

\end{proof}

By Theorem 4, an asymptotically optimal AoPI can be achieved by using a large value of $V$, whereas the recognition accuracy may degrade. However, we find that such trade-off is not obvious since AoPI and accuracy are not entirely conflicting metrics, according to Section \ref{ihp}.

\section{Evaluation and Discussions}
\label{sim_res}
In this section, we carry out extensive simulations and testbed experiments to validate the accuracy of the analytical results and evaluate the performance of proposed algorithm.

\subsection{Evaluation Setup}
In following sections, we utilize the real-trace dataset of
bandwidth and computation capacity variations collected from Ghent city \cite{DBLP:journals/icl/HooftPWHRBT16} and Bitbrains datacenter \cite{DBLP:conf/ccgrid/ShenBI15}, respectively. Unless otherwise stated, the average wireless bandwidth and computation capacity of edge servers are 30 MHz and 50 TFLOPS, respectively. The available video resolution include 384p, 512p, 640p, 768p, 896p and 1024p. \textcolor{black}{The cameras are divided into three groups evenly. The first group accounts for the object recognition task, where the model candidates include YOLOv5n,YOLOv5s, YOLOv5m, YOLOv5l and YOLOv5x \cite{yolov5m}. The second group accounts for the semantic segmentation task, where the model candidates include FPN\cite{DBLP:conf/cvpr/LinDGHHB17} and U-Net\cite{DBLP:journals/corr/RonnebergerFB15}. The third group accounts for the instance segmentation task, where the model candidates include YOLACT\cite{DBLP:conf/iccv/BolyaZXL19} and Mask R-CNN\cite{DBLP:journals/corr/HeGDG17}.} The Cityscapes dataset \cite{DBLP:conf/cvpr/CordtsORREBFRS16} is utilized to profile the time-varying recognition accuracy functions w.r.t. video resolution and neural network model. We adopt the following baselines from the state-of-the-art video analytics studies.

\begin{itemize}
\item \textbf{DOS}: proposed by \cite{DBLP:journals/ton/RongWLWLH22} to select video resolution, allocate bandwidth and computation resources, aiming to maximize accuracy minus latency.
\item \textbf{JCAB}: proposed by \cite{DBLP:journals/ton/ZhangWJWQXL22} to select video resolution and allocate bandwidth, aiming to maximize accuracy under latency constraint. We extend this method by allocating computation resources proportional to the computational complexity of frames, according to \cite{DBLP:journals/twc/RenYCH18}.
\item \textbf{MIN}: lower bound of the average AoPI in problem (P1), which is achieved by ignoring the long-term recognition accuracy requirement and treating all edge servers as one entity.
\end{itemize}

We notice that existing studies usually focus on video resolution adaptation. Hence, for fair comparison, the computation policy and neural network model in \textbf{DOS} and \textbf{JCAB} methods are selected according to Theorem 3, given other video resolution and resource allocation decisions. Besides, the DOS method is mainly designed to allocate resources on a single server. In following discussions, we let DOS and LBCD share the same edge server selection strategy.

\begin{figure}[tbp]
	\centering
	\begin{subfigure}{0.48\linewidth}
		\centering
		\includegraphics[width=0.95\linewidth]{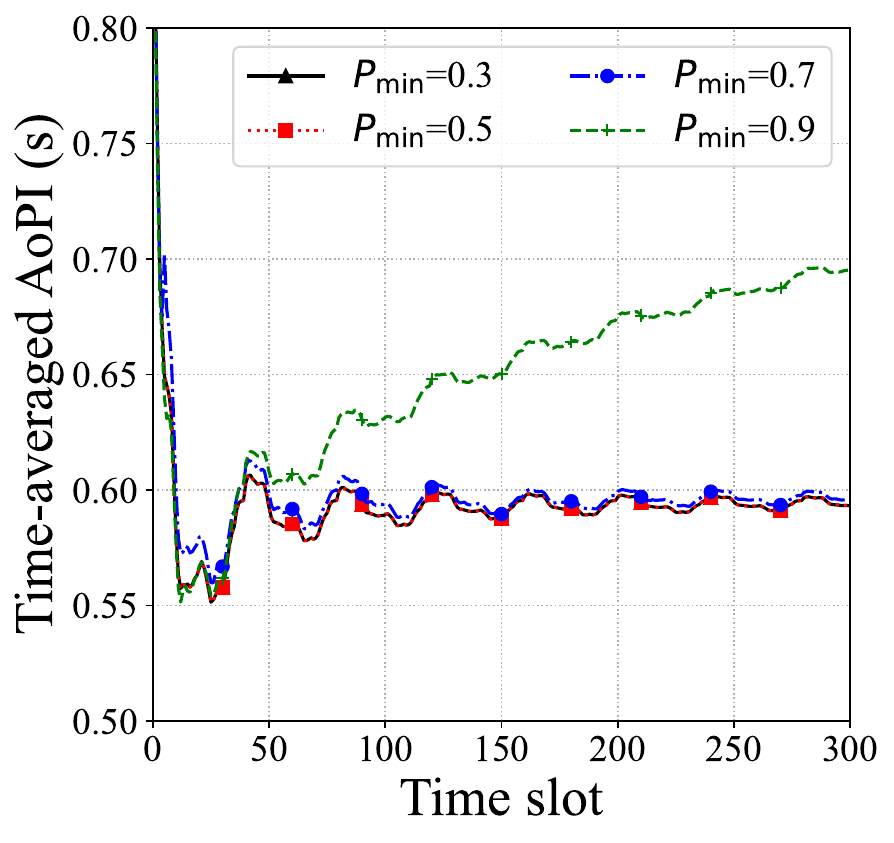}
        \caption{}
		\label{p_impact_a}
	\end{subfigure}
	\centering
	\begin{subfigure}{0.48\linewidth}
		\centering
		\includegraphics[width=0.95\linewidth]{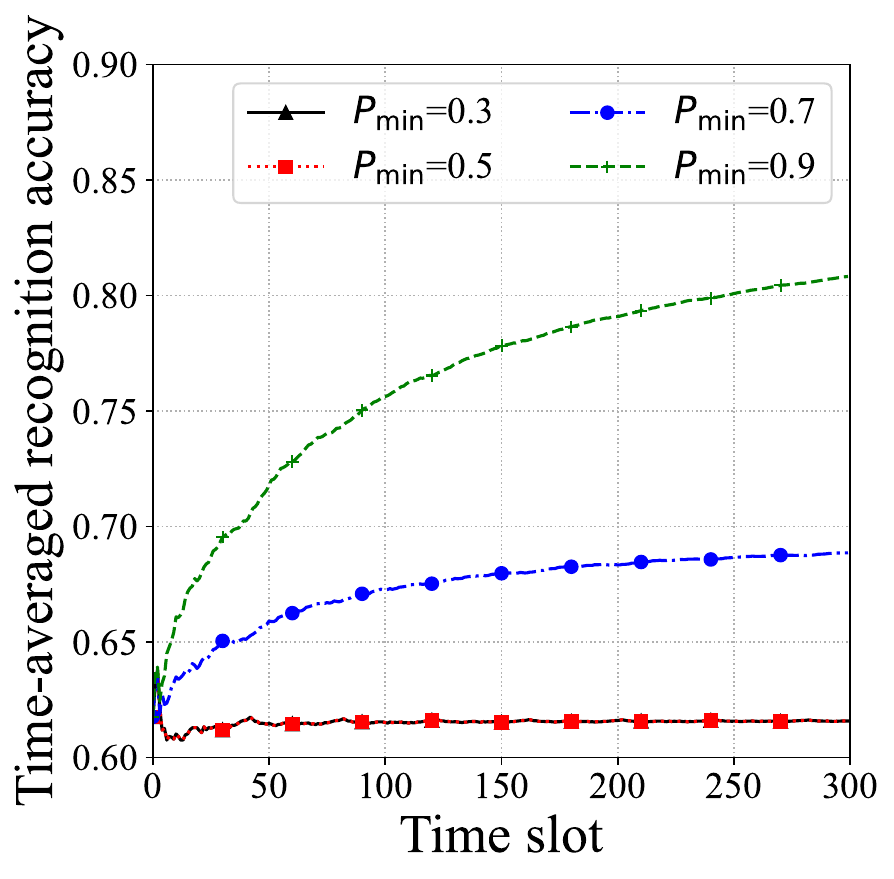}
        \caption{}
		\label{p_impact_b}
	\end{subfigure}
	\caption{Impact of the recognition accuracy threshold $P_{\mathrm{min} }$ on (a) average AoPI and (b) average recognition accuracy.}
	\label{p_impact}
\end{figure}

\begin{figure}[tbp]
	\centering
	\begin{subfigure}{0.48\linewidth}
		\centering
		\includegraphics[width=0.95\linewidth]{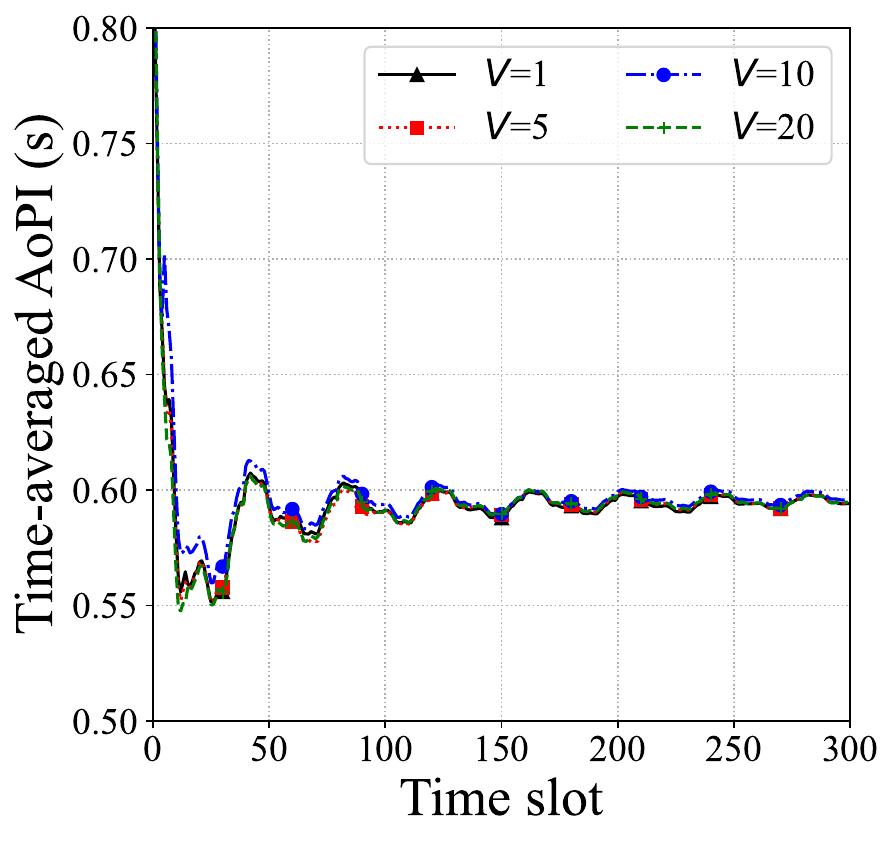}
        \caption{}
		\label{v_impact_a}
	\end{subfigure}
	\centering
	\begin{subfigure}{0.48\linewidth}
		\centering
		\includegraphics[width=0.95\linewidth]{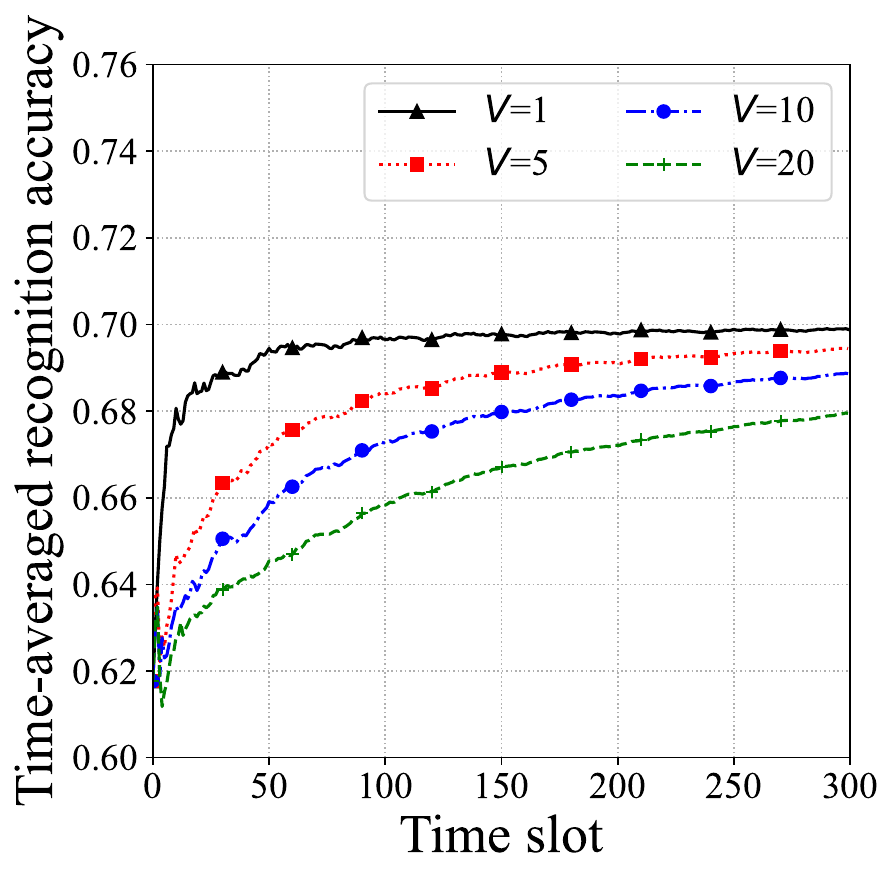}
        \caption{}
		\label{v_impact_b}
	\end{subfigure}
	\caption{Impact of the hyper-parameter $V$ on (a) average AoPI and (b) average recognition accuracy.}
	\label{v_impact}
\end{figure}

\subsection{Numerical Simulations}
In this section, we consider a system consisting of 3 edge servers and 30 cameras.

\subsubsection{Impact of Hyper-parameters}
\label{ihp}
\textcolor{black}{
We first analyze the impact of the long-term recognition accuracy threshold $P_{\mathrm{min} }$. According to Fig. \ref{p_impact}(a), the average AoPI surges when $P_{\mathrm{min} }$ increases to a very high level (e.g., 0.9). The main reason is that maintaining accuracy becomes the primary goal in such cases, which leads to a higher AoPI by always choosing high video resolution and large neural network model. From Fig. \ref{p_impact}(b), LBCD needs much more time slots to converge with a large $P_{\mathrm{min} }$. Interestingly, the recognition accuracy of LBCD always remains higher than 0.6 when $P_{\mathrm{min}}\le 0.5$. The main reason is that the video configuration with lowest AoPI has an accuracy of 0.61 on average. The insight is that AoPI and accuracy are not absolutely conflicting metrics, improving accuracy does not necessarily sacrifice AoPI. On the contrary, higher accuracy always enlarges latency in existing studies. In following discussions, we set $P_{\mathrm{min} }=0.7$, i.e., over 70\% video frames are accurately recognized, which can meet the requirements of many practical applications.}

The impact of the hyper-parameter $V$ in LBCD is depicted in Fig. \ref{v_impact}. As can be seen, LBCD converges much faster on the recognition accuracy with a small $V$. The main reason is that maintaining accuracy becomes the primary goal in such cases. With a large $V$, LBCD takes more time slots to meet the long-term recognition accuracy requirement, while achieving a slightly better AoPI. Thus, we set a median value $V=10$ in following discussions.

\begin{figure}[tbp]
	\centering
	\begin{subfigure}{0.48\linewidth}
		\centering
		\includegraphics[width=0.95\linewidth]{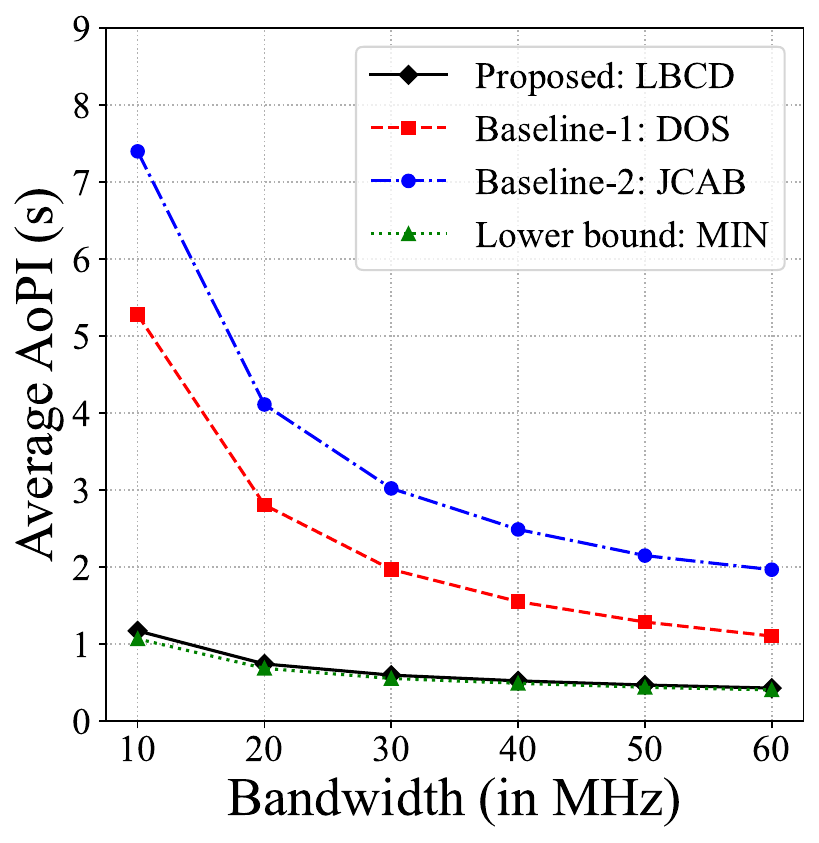}
        \caption{}
		\label{b_impact_a}
	\end{subfigure}
	\centering
	\begin{subfigure}{0.495\linewidth}
		\centering
		\includegraphics[width=0.95\linewidth]{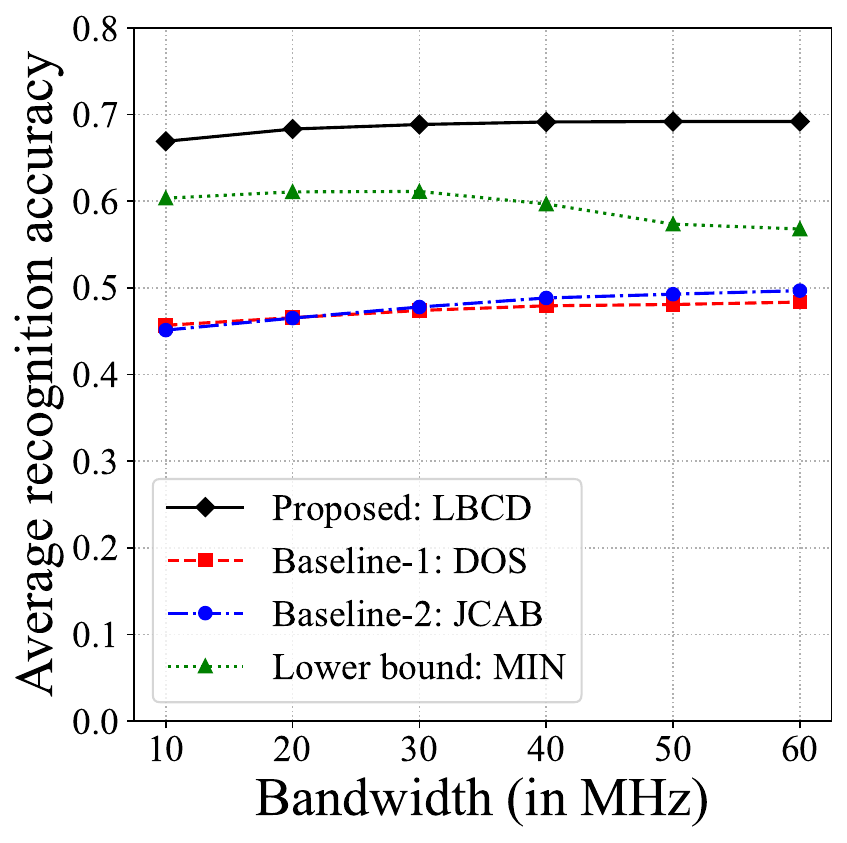}
        \caption{}
		\label{b_impact_b}
	\end{subfigure}
	\caption{Impact of bandwidth on (a) average AoPI and (b) average recognition accuracy.}
	\label{b_impact}
\end{figure}

\begin{figure}[tbp]
	\centering
	\begin{subfigure}{0.48\linewidth}
		\centering
		\includegraphics[width=0.95\linewidth]{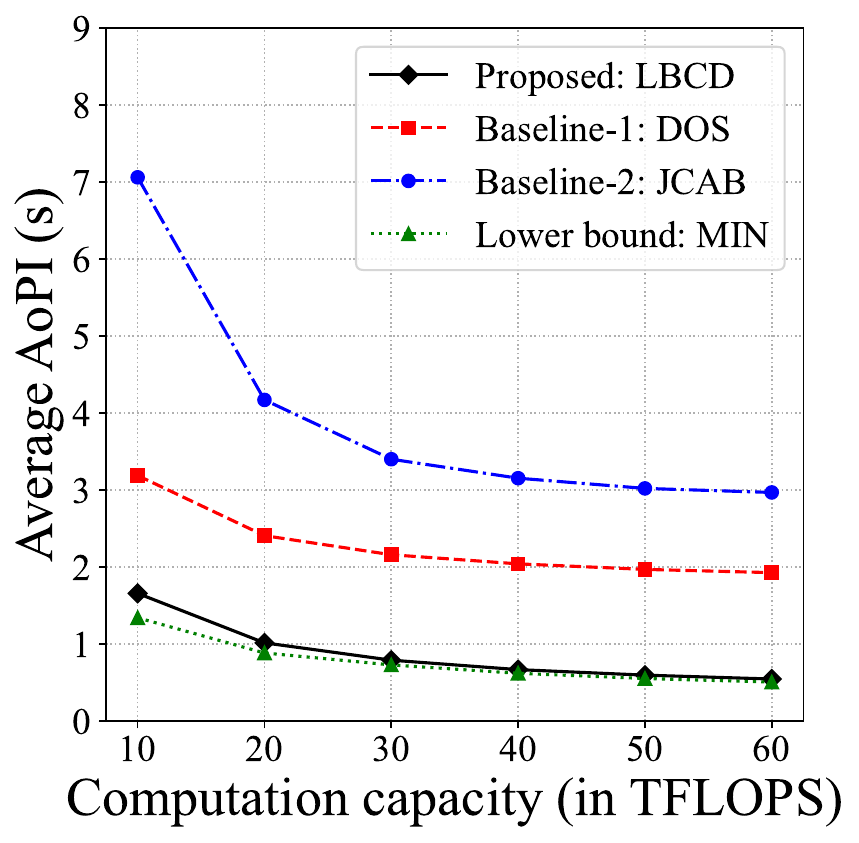}
        \caption{}
		\label{c_impact_a}
	\end{subfigure}
	\centering
	\begin{subfigure}{0.495\linewidth}
		\centering
		\includegraphics[width=0.95\linewidth]{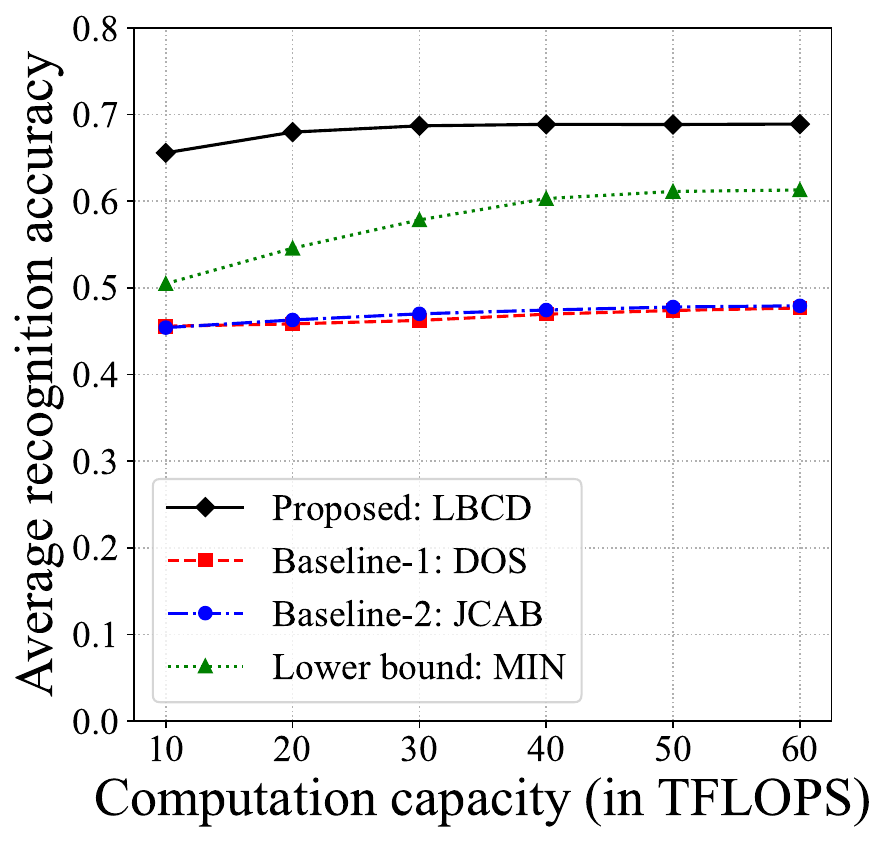}
        \caption{}
		\label{c_impact_b}
	\end{subfigure}
	\caption{Impact of computation capacity on (a) average AoPI and (b) average recognition accuracy.}
	\label{c_impact}
\end{figure}

\begin{figure}[t]
	\centering
	\begin{subfigure}{0.48\linewidth}
		\centering
		\includegraphics[width=0.95\linewidth]{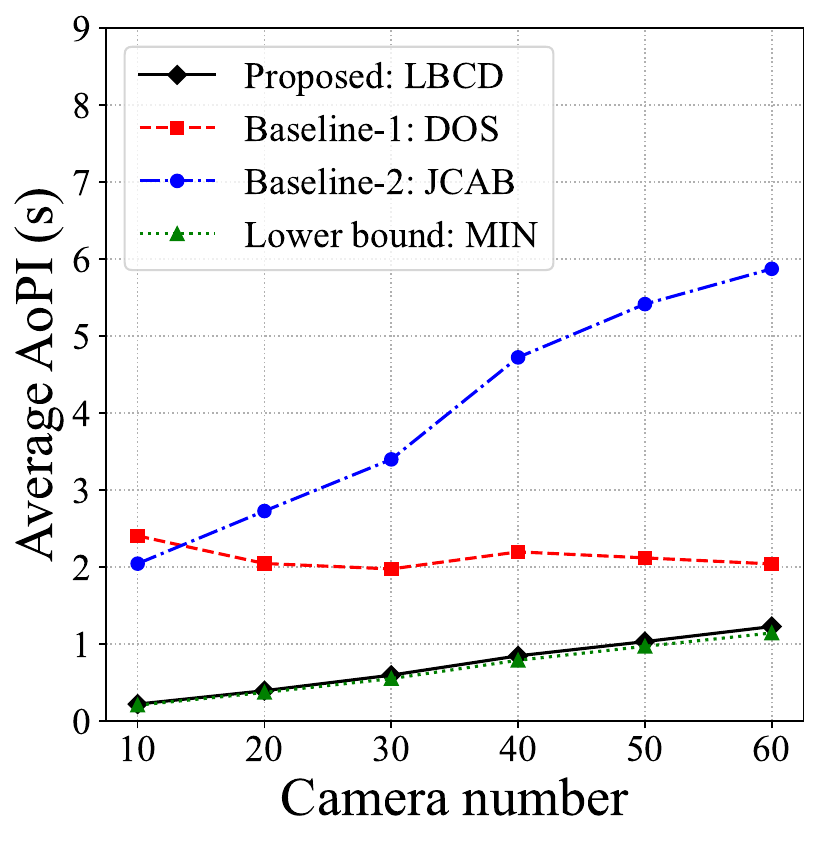}
        \caption{}
		\label{n_impact_a}
	\end{subfigure}
	\centering
	\begin{subfigure}{0.495\linewidth}
		\centering
		\includegraphics[width=0.95\linewidth]{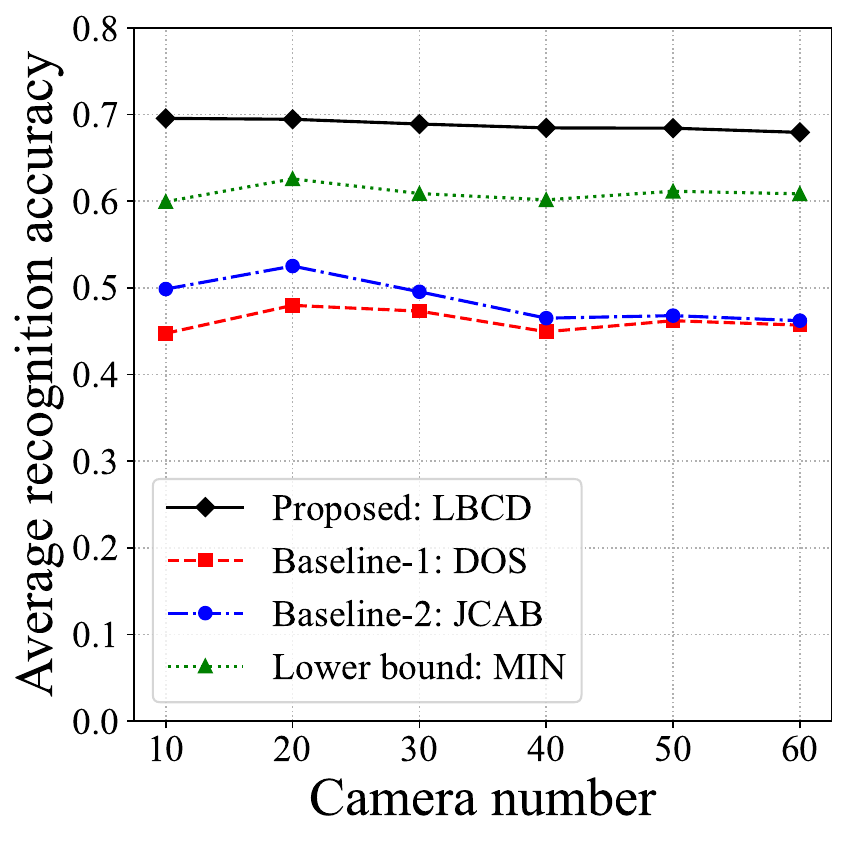}
        \caption{}
		\label{n_impact_b}
	\end{subfigure}
	\caption{Impact of camera number on (a) average AoPI and (b) average recognition accuracy.}
	\label{n_impact}
\end{figure}

\begin{figure}[t]
	\centering
	\begin{subfigure}{0.48\linewidth}
		\centering
		\includegraphics[width=0.95\linewidth]{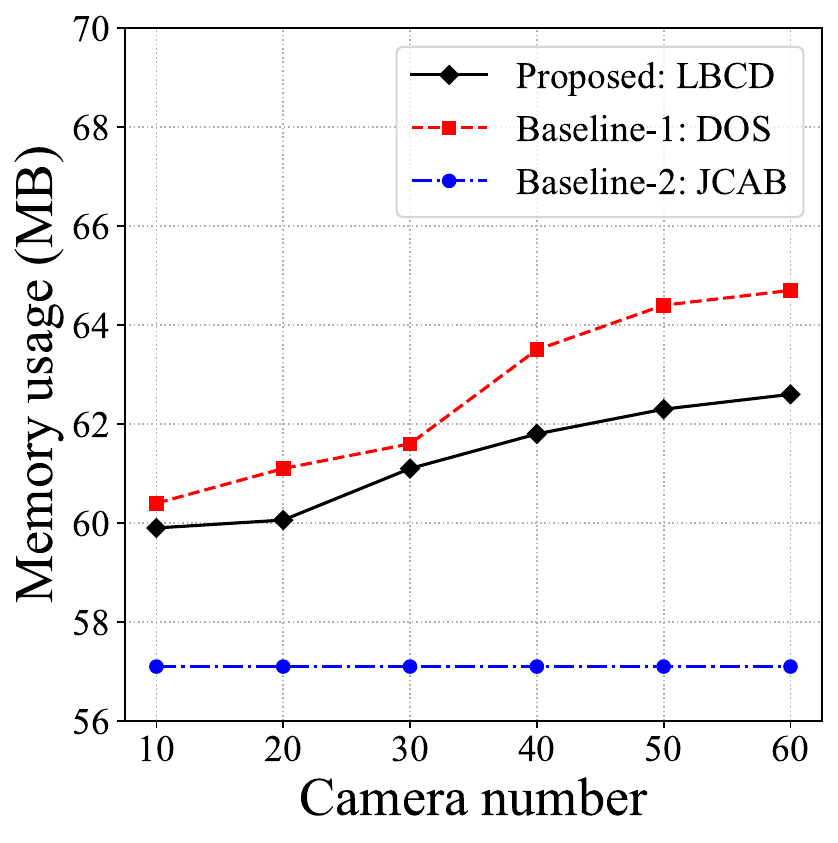}
        \caption{}
		\label{memory}
	\end{subfigure}
	\centering
	\begin{subfigure}{0.49\linewidth}
		\centering
		\includegraphics[width=0.95\linewidth]{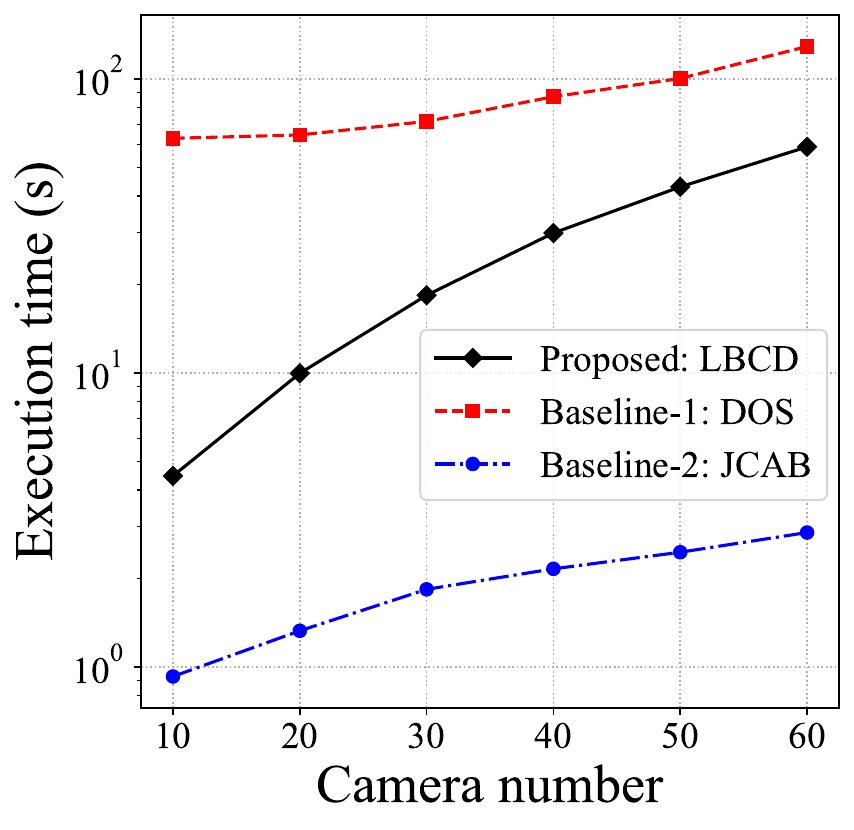}
        \caption{}
		\label{time}
	\end{subfigure}
	\caption{Memory usage (a) and execution time (b) of different methods.}
	\label{overhead}
\end{figure}

\subsubsection{Performance Comparison}
\label{perc}
\textcolor{black}{
The impact of wireless bandwidth is depicted in Fig. \ref{b_impact}. As can be seen, LBCD significantly outperforms other baselines and achieves close-to-optimal AoPI. Specifically, LBCD reduces AoPI by up to 6.32X compared with the JCAB method (with 20MHz bandwidth) and 4.51X with the DOS method (with 10MHz bandwidth).} According to Fig. \ref{b_impact}(a), all methods can achieve lower AoPI when the wireless bandwidth of edge servers increases. The main reason is that the transmission latency of video frames becomes low.

\textcolor{black}{
In Fig. \ref{b_impact}(b), LBCD achieves the highest recognition accuracy. Generally, higher wireless bandwidth helps LBCD to reach better recognition accuracy. The main reason is that LBCD has to choose low-resolution frames when the bandwidth is limited (e.g., 10MHz).} Note that the recognition accuracy of LBCD becomes stable when the bandwidth exceeds 20MHz. The main reason is that choosing video configurations with accuracy higher than the threshold $P_{\mathrm{min} }=0.7$ leads to worse AoPI in such cases. Interestingly, the accuracy of the MIN method decreases with more bandwidth. The main reason is that MIN will reduce the video resolution when the bandwidth increases without the requirement of recognition accuracy. This strategy leads to better AoPI because the significant reduction in latency can compensate for the slight drop in recognition accuracy. 

The JCAB and DOS method has the worst recognition accuracy in Fig. \ref{b_impact}(b). According to \cite{DBLP:journals/ton/ZhangWJWQXL22}, JCAB has to satisfy a constraint on the total transmission and computation latency when choosing video configurations\footnote{We follow the settings in \cite{DBLP:journals/ton/ZhangWJWQXL22} and set the latency constraint as 0.5s.}. JCAB will directly choose the highest video resolution and largest neural network model which can meet the latency constraint. Thus, a stringent constraint will lead to low recognition accuracy. Although a loose constraint helps increases accuracy, it leads to worse AoPI due to the high transmission/ computation latency.

According to \cite{DBLP:journals/ton/RongWLWLH22}, the DOS method minimizes the AoPI minus recognition accuracy. We note that the latency of different video configurations grows much faster than the accuracy. For example, when the resolution increases from 256p to 1024p (using the YOLOv5m model), the total latency grows from 0.02 to 0.6, while the total latency grows from 0.05s to 0.91s. Therefore, DOS always selects the lowest video resolution and lightest neural network model in our simulations, leading to poor accuracy. Although setting different weights helps DOS value accuracy more, it is impractical to use a fixed weight factor to accommodate all network conditions.

\textcolor{black}{
Next, the impact of computation capacity is shown in Fig. \ref{c_impact}. According to Fig. \ref{c_impact}(a), LBCD still achieves close-to-optimal AoPI. Specifically, LBCD reduces AoPI by up to 5.44X compared with the JCAB method and 3.53X with the DOS method when the computation capacity equals 60 TFLOPS. Generally, higher computation capacity helps all methods achieve lower AoPI due to the reduced computation latency.}

In Fig. \ref{c_impact}(b), higher computation capacity also helps to achieve higher recognition accuracy. Specifically, the LBCD method choose larger models with increased computation capacity, which has better accuracy without sacrificing the computation latency of frames. When the computation capacity is extremely limited, the MIN method has to choose a low resolution and small model to balance frames' transmission and computation latency. Then, with higher computation capacity, the accuracy of the MIN method increases rapidly because it switches to high resolution and large models. Finally, as discussed in Fig. \ref{b_impact}, the JCAB and DOS methods still have the worst accuracy.

\textcolor{black}{
The impact of the camera number is depicted in Fig. \ref{n_impact}. According to Fig. \ref{n_impact}(a), LBCD still achieves close-to-optimal AoPI. Specifically, LBCD reduces AoPI by up to 9.3X compared with the JCAB method and 10.94X with the DOS method (with 10 cameras). Besides, the AoPI of LBCD and JCAB methods increase near-linearly with more cameras due to the strengthened resource contention among all cameras. Interestingly, the AoPI of DOS method is relatively stable. The main reason is that its resource allocation strategy is much unbalanced. Correspondingly, the AoPI of most cameras are always high. From Fig. \ref{n_impact}(b), the recognition accuracy of all methods generally decreases with the camera number. However, LBCD still achieves a relatively stable and high recognition accuracy because it chooses the same video configuration when the allocated resources shrink linearly with the camera number.}

\textcolor{black}{
Finally, the computational overhead of different methods are shown in Fig. \ref{overhead}. As can be seen, all methods have similar memory usage of about 60 MB, which slightly increases with the cameras number. On the other hand, all methods have significant difference in terms of execution time. The LBCD method performs worse than JCAB but much better than DOS. Nevertheless, LBCD can still handle the case of 20 cameras within 10 seconds. In the practical environment, it is less likely to deploy more than 20 cameras in a small area, indicating that the execution time of LBCD is acceptable.}

\begin{figure}[t] 
\centering 
\includegraphics[width=\linewidth]{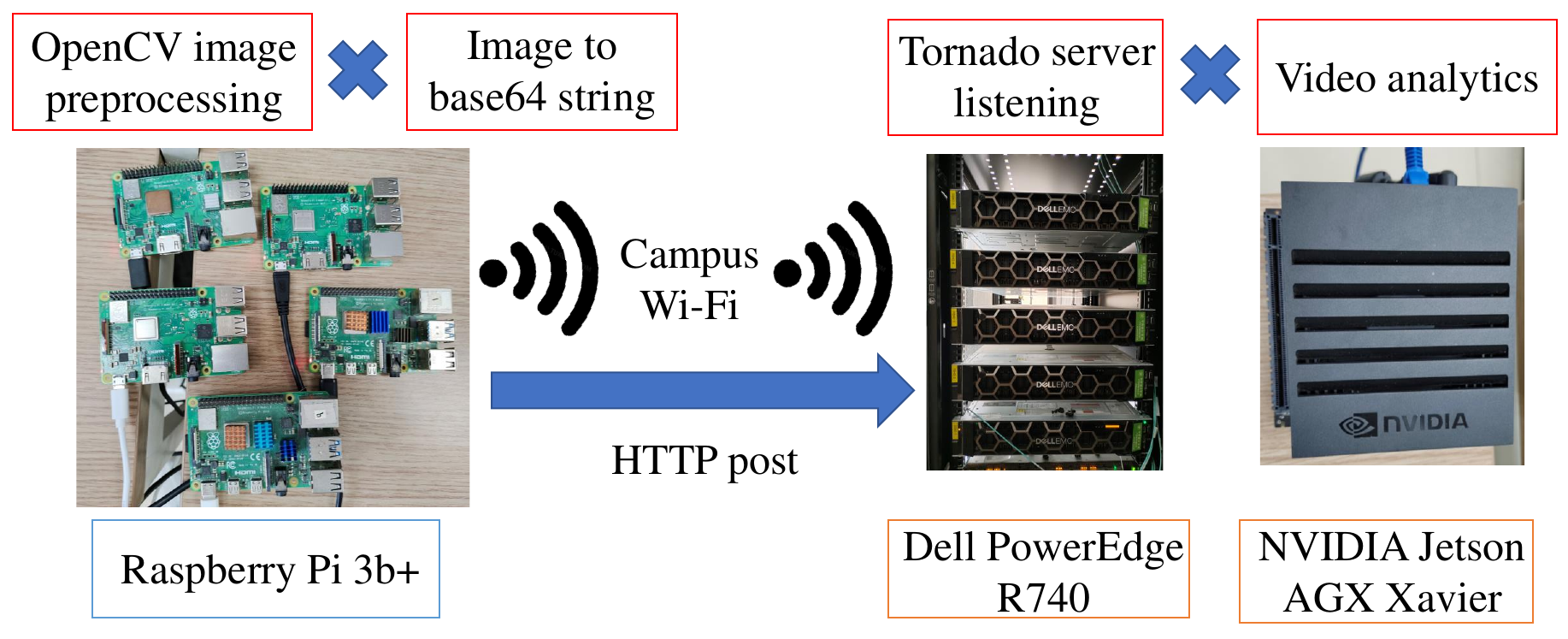} 
\caption{An illustration of the testbed.}
\label{pro}
\end{figure}

\begin{figure}[t]
	\centering
	\begin{subfigure}{0.48\linewidth}
		\centering
		\includegraphics[width=0.95\linewidth]{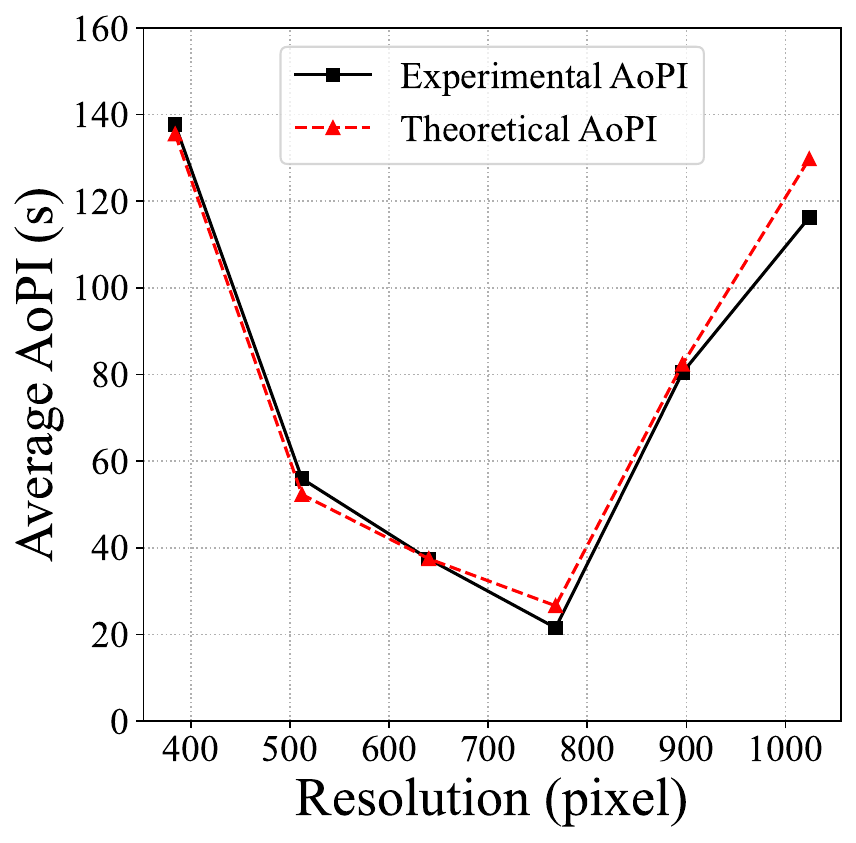}
        \caption{}
		\label{lcfsp_trade_t2}
	\end{subfigure}
	\centering
	\begin{subfigure}{0.48\linewidth}
		\centering
		\includegraphics[width=0.95\linewidth]{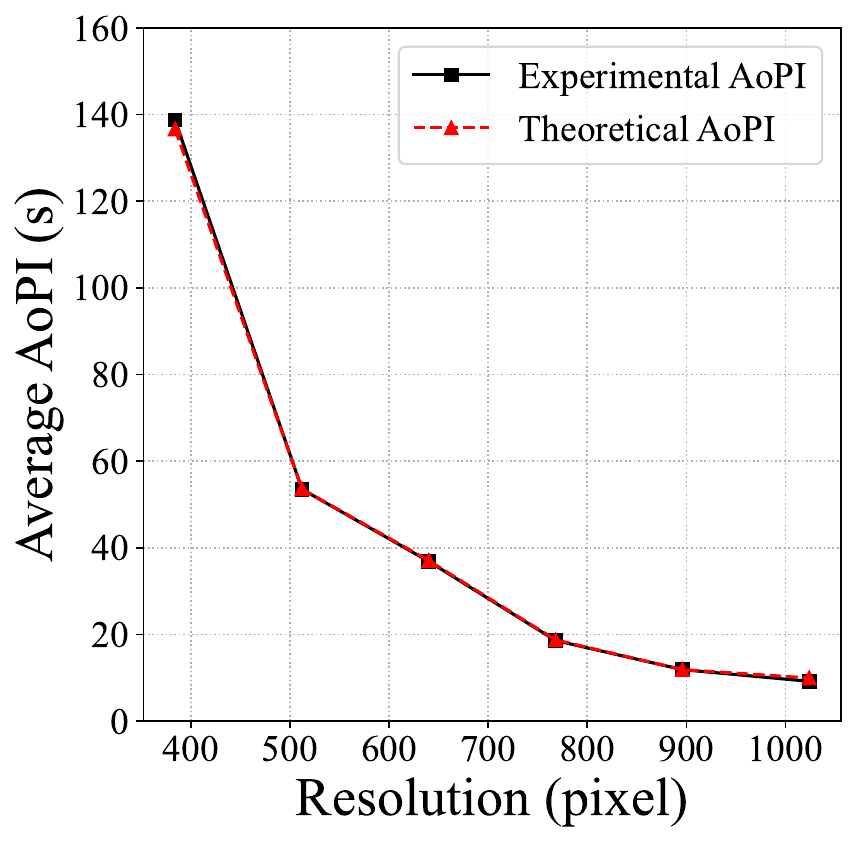}
        \caption{}
		\label{lcfsp_trade_c1}
	\end{subfigure}
	\caption{Comparison between the experimental and theoretical average AoPI on the CPU-based edge server under (a) FCFS and (b) LCFSP policy.}
	\label{et_age}
\end{figure}

\begin{figure}[t]
	\centering
	\begin{subfigure}{0.48\linewidth}
		\centering
		\includegraphics[width=0.95\linewidth]{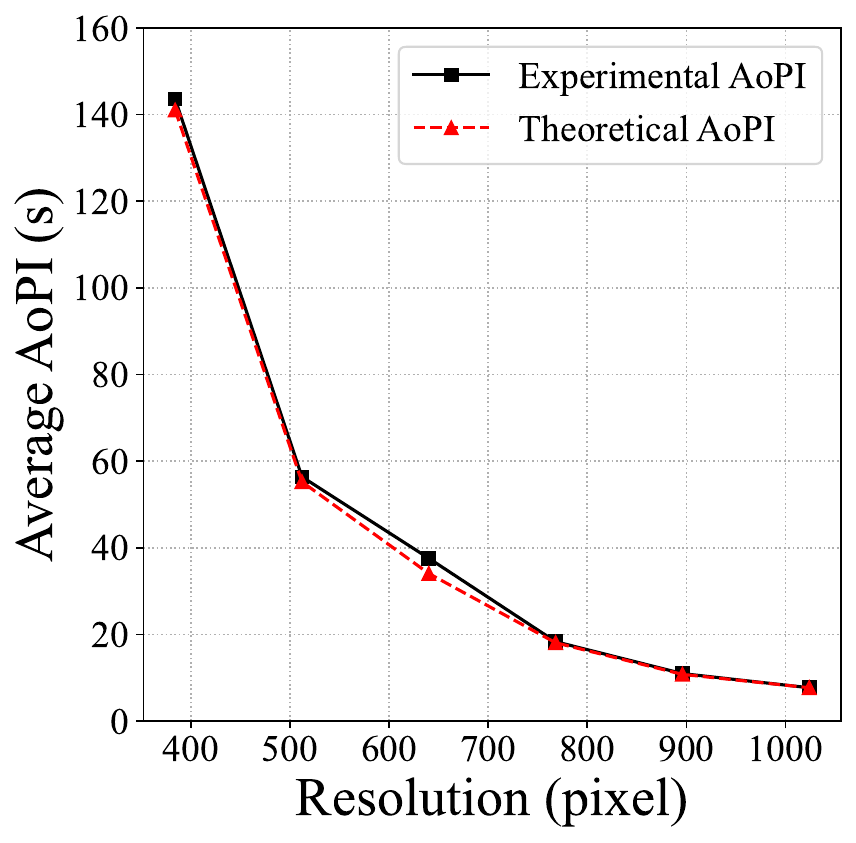}
        \caption{}
		\label{lcfsp_trade_t3}
	\end{subfigure}
	\centering
	\begin{subfigure}{0.48\linewidth}
		\centering
		\includegraphics[width=0.95\linewidth]{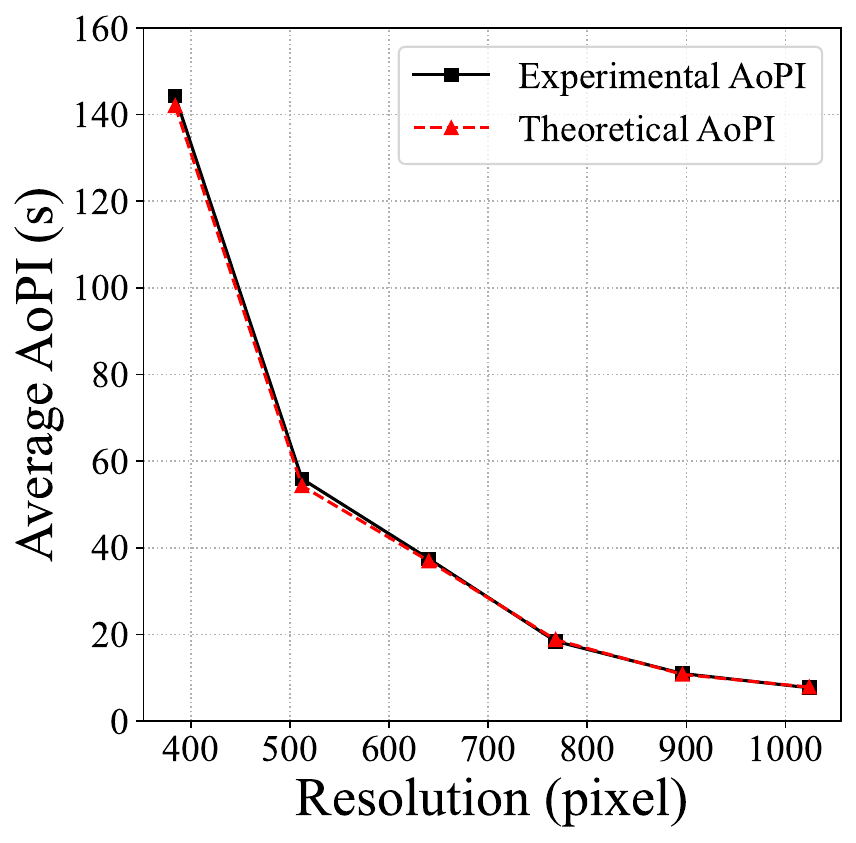}
        \caption{}
		\label{lcfsp_trade_c2}
	\end{subfigure}
	\caption{Comparison between the experimental and theoretical average AoPI on the GPU-based edge server under (a) FCFS and (b) LCFSP policy.}
	\label{et_age2}
\end{figure}

\subsection{Testbed Experiments}
In this section, we build a testbed to validate the effectiveness of the analytical results in Section \ref{ave_aoi} and further evaluate the performance of the proposed LBCD method.

\subsubsection{Analytical Results Validation}
\label{pv}
Our testbed is shown in Fig. \ref{pro}. We employ five Raspberry pi 3b+ as the cameras, which upload video frames to the edge servers via the campus WiFi. The edge servers are built upon two popular commercial hardware, i.e., the NVIDIA Jetson AGX Xavier board (act as GPU-based edge server) and Dell PowerEdge R740 (act as CPU-based edge server). On the Raspberry pi, the video frames are transmitted via HTTP post requests. On the edge server, a tornado server\footnote{https://www.tornadoweb.org/} is started to receive the uploaded frames. The YOLOv5m model \cite{yolov5m} is utilized for object recognition. Only frames with mean average precision (mAP) higher than 0.6 are considered as accurately recognized.

The theoretical and experimental AoPI under FCFS and LCFSP policy are shown in Fig. \ref{et_age} and \ref{et_age2}. Generally, the theoretical results deviate from the experimental ones by 3.33\% on average in Fig. \ref{et_age} and \ref{et_age2}. Besides, the theoretical optimal video configuration also has the best experimental AoPI. Hence, although frames' transmission/computation delay may not follow exponential distributions exactly in practical systems, our analysis results are still sufficiently accurate and effective to help adapt the video configurations.

We also note that the results in Fig. \ref{et_age2}(a) and \ref{et_age2}(b) are pretty similar. The main reason is that the transmission rate of video frames is much lower than the computation rate on the GPU-based edge server. Thus only very few frames can be preempted under the LCFSP policy. On the other hand, in Fig. \ref{et_age}, the computation rate of high-resolution frames (e.g., 1024p) is much lower on the CPU-based edge server. Hence, many frames must wait in a queue under the FCFS policy, leading to high AoPI.

\begin{figure}[tbp]
	\centering
	\begin{subfigure}{0.48\linewidth}
		\centering
		\includegraphics[width=0.95\linewidth]{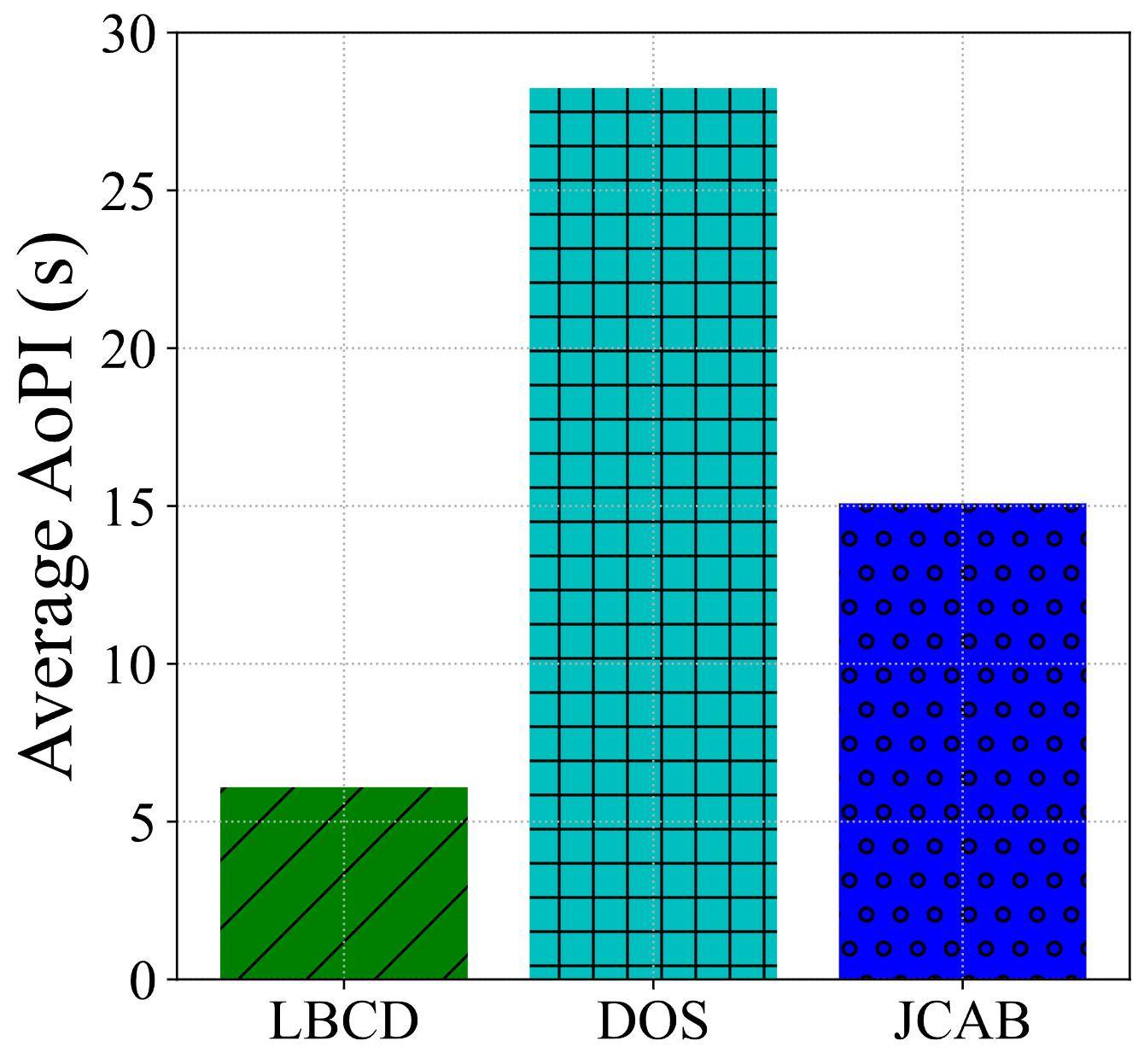}
        \caption{}
		\label{testbed_a}
	\end{subfigure}
	\centering
	\begin{subfigure}{0.48\linewidth}
		\centering
		\includegraphics[width=0.95\linewidth]{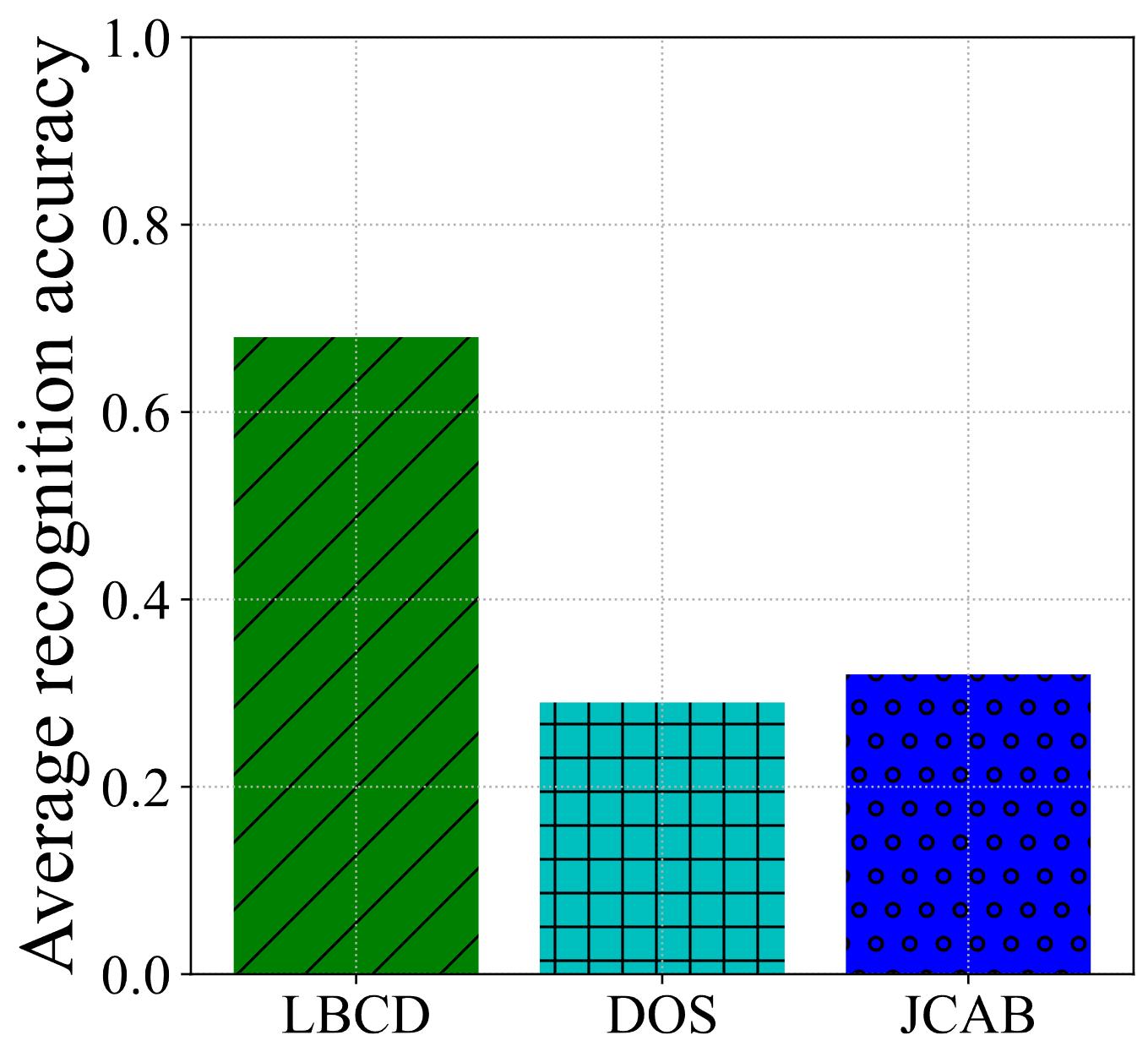}
        \caption{}
		\label{testbed_b}
	\end{subfigure}
	\caption{Comparison of (a) average AoPI and (b) average recognition accuracy on the testbed.}
	\label{testbed}
\end{figure}

\subsubsection{Testbed Evaluation}
\textcolor{black}{
The testbed consists of five cameras and two edge servers in this section. The first three cameras account for the object recognition task, while the remaining two cameras account for the segmentation task.} In the WiFi environment, we implement the bandwidth allocation strategies by controlling the uploading data rate between cameras and the edge servers using WonderShaper\footnote{https://github.com/magnific0/wondershaper}. Besides, we note that there currently lacks fine-grained GPU resource management tools (i.e., one task is equipped with one or more GPUs, rather than 10\% of a GPU). Hence, we conduct experiments only on the CPU-based edge servers (Dell PowerEdge R740). Specifically, we implement the computation resource allocation strategies by controlling the number of CPU cores used by different processes in PyTorch\footnote{https://pytorch.org/}. \textcolor{black}{
The average AoPI and recognition accuracy of different methods are shown in Fig. \ref{testbed}. As can be seen, LBCD achieves the best AoPI and meets the long-term recognition accuracy requirement (i.e., higher than 0.7). Specifically, LBCD reduces the average AoPI by 4.63X and 2.47X compared with the DOS and JCAB methods, respectively, demonstrating its superiority in practical systems.} 

In conclusion, according to the above simulations and testbed experiments, the proposed LBCD method can significantly outperform the state-of-the-art approaches in terms of AoPI and maintain the long-term recognition accuracy under various network conditions.

\section{Conclusion and Future Work}
\label{con_fut}
\textcolor{black}{
This paper investigated the joint edge server selection, video configuration adaptation, and resource allocation problem. Instead of struggling to balance the heterogeneous delay and accuracy metrics, we introduced a timeliness concept, AoPI, to depict the integrated impact of transmission/computation efficiency and recognition accuracy. The closed-form expressions of AoPI are derived under FCFS and LCFSP computation policies. Based on the analytical results, an optimization problem is formulated to minimize the long-term average AoPI across all cameras. An online LBCD method is proposed to handle the problem requiring no future knowledge. Simulations and testbed experimental results show that the proposed method reduces AoPI by up to 10.94X compared with the state-of-the-art baselines. In future work, we will extend the AoPI analysis under general delay distributions and consider the energy consumption requirements of practical systems.}



\end{document}